\newtheorem{claim}{Claim}[section]
\newtheorem{theorem}{Theorem}[section]
\newtheorem{lemma}[theorem]{Lemma}
\newtheorem{definition}[theorem]{Definition}
\newtheorem{proposition}[theorem]{Proposition}
\newtheorem*{remark}{Remark}
\def\*#1{\mathbf{#1}} \def\+#1{\mathcal{#1}} \def\-#1{\mathrm{#1}}\def\^#1{\mathbb{#1}}\def\!#1{\mathtt{#1}}\def\@#1{\mathscr{#1}}
\newcommand{\set}[1]{\left\{#1\right\}}
\renewcommand{\mid}{\;\middle\vert\;}
\renewcommand{\Pr}[2][]{ \ifthenelse{\isempty{#1}}
  {\mathbf{Pr}\left[#2\right]} {\mathbf{Pr}_{#1}\left[#2\right]} }
\newcommand{\E}[2][]{ \ifthenelse{\isempty{#1}}
  {\mathbf{E}\left[#2\right]}
  {\mathbf{E}_{#1}\left[#2\right]} }
\newcommand{\Var}[2][]{ \ifthenelse{\isempty{#1}}
  {\mathbf{Var}\left[#2\right]}
  {\mathbf{Var}_{#1}\left[#2\right]} }
\newcommand{\Ent}[2][]{ \ifthenelse{\isempty{#1}}
  {\mathbf{Ent}\left[#2\right]}
  {\mathbf{Ent}_{#1}\left[#2\right]} }
\renewcommand{\emptyset}{\varnothing}
\newcommand{\alg}{\mathsf{ALG}}
\newcommand{\opt}{\mathsf{OPT}}
\newcommand*\samethanks[1][\value{footnote}]{\footnotemark[#1]}
\title{Improved Competitive Ratio for Edge-Weighted Online Stochastic Matching\thanks{The authors are ordered alphabetically.}}
\author{Yilong Feng
	\thanks{IOTSC, University of Macau. \{mc15517,xiaoweiwu,yc17423\}@um.edu.mo.}
        \and Guoliang Qiu \thanks{John Hopcroft Center for Computer Science, Shanghai Jiao Tong University. guoliang.qiu@sjtu.edu.cn. Part of this work was done while the author was visiting the University of Macau.}
	\and Xiaowei Wu\samethanks[2]
	\and Shengwei Zhou\samethanks[2]}
\begin{document}
\maketitle

\begin{abstract}
    We consider the edge-weighted online stochastic matching problem, in which an edge-weighted bipartite graph $G=(I\cup J, E)$ with offline vertices $J$ and online vertex types $I$ is given.
    The online vertices have types sampled from $I$ with probability proportional to the arrival rates of online vertex types.
    The online algorithm must make immediate and irrevocable matching decisions with the objective of maximizing the total weight of the matching.
    For the problem with general arrival rates, Feldman et al. (FOCS 2009) proposed the \textsf{Suggested Matching} algorithm and showed that it achieves a competitive ratio of $1-1/e \approx 0.632$.
    The ratio has recently been improved to $0.645$ by Yan (2022), who proposed the \textsf{Multistage Suggested Matching} (MSM) algorithm.
    In this paper, we propose the \textsf{Evolving Suggested Matching} (ESM) algorithm and show that it achieves a competitive ratio of $0.650$.
\end{abstract}

\section{Introduction}\label{sec:intro}

Motivated by its real-world applications, the online bipartite matching problem has received extensive attention since the work of Karp, Vazirani, and Vazirani~\cite{conf/stoc/KarpVV90} in 1990.
The problem is defined on a bipartite graph where one side of the vertices are given to the algorithm in advance (aka. the \emph{offline} vertices), and the other side of (unknown) vertices arrive online one by one.
Upon the arrival of an online vertex, its incident edges are revealed and the online algorithm must make immediate and irrevocable matching decisions with the objective of maximizing the size of the matching.
The performance of the online algorithm is measured by the \emph{competitive ratio}, which is the worst ratio between the size of matching computed by the online algorithm and that of the maximum matching, over all online instances. 
As shown by Karp et al.~\cite{conf/stoc/KarpVV90}, the celebrated \textsf{Ranking} algorithm achieves a competitive ratio of $1-1/e \approx 0.632$ and this is the best possible for the problem.
However, the assumption that the algorithm has no prior information
regarding the online vertices and the adversary decides the arrival order of these vertices, is believed to be too restrictive and in fact unrealistic.
Therefore, several other arrival models with weaker adversaries, including the random arrival model~\cite{conf/stoc/MahdianY11,conf/stoc/KarandeMT11,journals/talg/HuangTWZ19,conf/wine/JinW21}, the degree-bounded model~\cite{journals/teco/NaorW18,conf/soda/CohenW18,conf/esa/AlbersS22} and the stochastic model~\cite{conf/focs/FeldmanMMM09,journals/mor/JailletL14,conf/stoc/HuangS21,conf/stoc/0002SY22},
have been proposed.

The stochastic model is proposed by Feldman et al.~\cite{conf/focs/FeldmanMMM09}, in which the arrivals of online vertices follow a known distribution.
Specifically, in the stochastic setting there is a bipartite graph $G = (I\cup J, E)$ that is known by the algorithm, where $J$ contains the offline vertices and $I$ contains the online vertex \emph{types}, where each vertex type $i\in I$ is associated with an arrival rate $\lambda_i$.
There are $\Lambda = \sum_{i\in I} \lambda_i$ online vertices to be arrived.
Each online vertex has a type sampled from $I$ independently, where type $i\in I$ is sampled with probability $\lambda_i/\Lambda$.
The online vertex with type $i$ has its set of neighbors defined by the neighbors of $i$ in the graph $G = (I\cup J, E)$.
The competitive ratio for the online algorithm is then measured by the worst ratio of $\frac{\E{\alg}}{\E{\opt}}$ over all instances, where $\alg$ denotes the size of matching produced by the algorithm and $\opt$ denotes that of the maximum matching.
For the online stochastic matching problem, Feldman et al.~\cite{conf/focs/FeldmanMMM09} proposed the \textsf{Suggested Matching} algorithm that achieves a competitive ratio of $1-1/e$ and the \textsf{Two Suggested Matching} that is $0.67$-competitive for instances with integral arrival rates.
These algorithms are based on the framework that makes matching decisions in accordance to some offline optimal solution $\boldsymbol{x}$ pre-computed on the instance $(G,\{ \lambda_i \}_{i\in I})$.
Especially, the \textsf{Two Suggested Matching} algorithm employs a novel application of the idea called \emph{the power of two choices} by specifying two offline neighbors and matching one of them if any of these two neighbors is unmatched.
The approximation ratio was later improved by a sequence of works~\cite{conf/esa/BahmaniK10,journals/mor/ManshadiGS12,journals/mor/JailletL14,journals/algorithmica/BrubachSSX20a}, resulting in the state-of-the-art competitive ratio $0.7299$ by~\cite{journals/algorithmica/BrubachSSX20a} under the integral arrival rate assumption.
Without this assumption, the first competitive ratio beating $1-1/e$ was obtained by Manshadi et al.~\cite{journals/mor/ManshadiGS12}, who provided a $0.702$-competitive algorithm for the problem.
The competitive ratio was then improved to $0.706$, $0.711$, and $0.716$ by Jaillet and Lu~\cite{journals/mor/JailletL14}, Huang and Shu~\cite{conf/stoc/HuangS21} and Huang et al.~\cite{conf/stoc/0002SY22}, respectively. 
Notably, Huang and Shu~\cite{conf/stoc/HuangS21} established the asymptotic equivalence between the original stochastic arrival model and the Poisson arrival model in which online vertex types arrive independently following Poisson processes. 

The weighted versions of the online stochastic matching problem have also received a considerable amount of attention.
In the edge-weighted (resp. vertex-weighted) version of the problem, each edge (resp. offline vertex) is associated with a non-negative weight and the objective is to compute a matching with maximum total edge (resp. offline vertex) weight. 
For the vertex-weighted version, Jaillet and Lu~\cite{journals/mor/JailletL14} and Brubach et al.~\cite{journals/algorithmica/BrubachSSX20a} achieved a competitive ratio of $0.725$ and $0.7299$, respectively, under the integral arrival rate assumption.
Without this assumption, Huang and Shu~\cite{conf/stoc/HuangS21} and Tang et al.~\cite{conf/stoc/TangWW22} achieved a competitive ratio of $0.7009$ and $0.704$ respectively, while the state-of-the-art ratio $0.716$ was achieved by Huang et al.~\cite{conf/stoc/0002SY22}.
For the edge-weighted version, under the integral arrival rate assumption, competitive ratios $0.667$ and $0.705$ were proved by Haeuper et al.~\cite{conf/wine/HaeuplerMZ11} and Brubach et al.~\cite{journals/algorithmica/BrubachSSX20a}, respectively.
Without the assumption, the $(1-1/e)$-competitive \textsf{Suggested Matching} algorithm by Feldman et al~\cite{conf/focs/FeldmanMMM09} remained the state-of-the-art until recently Yan~\cite{Y22} proposed an algorithm called \textsf{Multistage Suggested Matching} (MSM) algorithm and showed that it achieves a competitive ratio of $0.645$.
Regarding hardness results, Huang et al.~\cite{conf/stoc/0002SY22} proved that no algorithm can be $0.703$-competitive for the edge-weighted online stochastic matching problem.
The hardness result separates the edge-weighted version of the problem from the unweighted and vertex-weighted versions, for both of which competitive ratios strictly larger than $0.703$ have already been proved.
It also separates the problem without the integral arrival rate assumption from that with the assumption, which indicates the difficulty of the problem.

\subsection{Our Contribution}

In this paper, we consider the edge-weighted online stochastic matching problem without the integral arrival rate assumption, and propose the \textsf{Evolving Suggested Matching} (ESM) algorithm that improves the state-of-the-art competitive ratio to $0.650$.

\begin{theorem} \label{theo:competitive-ratio-origin}
    The \textsf{Evolving Suggested Matching} algorithm is $0.650$-competitive for edge-weighted online stochastic matching with a sufficiently large number of arrivals.
\end{theorem}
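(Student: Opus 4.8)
The plan is to analyze the \textsf{Evolving Suggested Matching} (ESM) algorithm by reducing, via the Huang--Shu equivalence~\cite{conf/stoc/HuangS21}, to the Poisson arrival model, where online types arrive as independent Poisson processes on $[0,1]$ with rate $\lambda_i$. In this regime the fractional optimum $\boldsymbol{x}$ (a solution to the LP relaxation of the offline matching on $(G,\{\lambda_i\})$) serves both as the benchmark for $\E{\opt}$ and as the object guiding the algorithm's suggested matches. First I would set up the LP benchmark: $\E{\opt} \le \sum_{e \in E} w_e x_e$ up to lower-order terms, so it suffices to show the expected weight collected by ESM on each edge $e = (i,j)$ is at least $0.650 \cdot w_e x_e$. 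The core of ESM, as its name suggests, is that the suggested match for an online type $i$ is not fixed throughout $[0,1]$ but \emph{evolves} over time: the time horizon is partitioned into stages (generalizing Yan's two-stage MSM), and within each stage type $i$ proposes to a particular offline neighbor chosen according to the residual fractional solution, shifting its proposal to a ``fresher'' neighbor once earlier targets have become saturated. So the second step is to write down, for each offline vertex $j$, a differential inequality (or a system of them across stages) governing the evolution of $p_j(t) := \Pr{j \text{ is unmatched at time } t}$; the arrival of a type $i$ that currently suggests $j$ decreases $p_j$ at an instantaneous rate proportional to $\lambda_i \cdot p_j(t)$ times the probability that $i$'s decision indeed falls on $j$.

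The key steps, in order: (1) formalize the stage structure of ESM and the rule by which suggested matches are reassigned at stage boundaries, ensuring that at every time $t$ the total suggestion rate into each $j$ is controlled by $x$; (2) solve (or lower-bound) the resulting piecewise differential equations for $p_j(t)$ stage by stage, obtaining closed-form expressions for the survival probabilities at each stage boundary — this is where the choice of stage boundaries and the reassignment weights become tunable parameters; (3) for a fixed edge $e=(i,j)$, integrate over $[0,1]$ the probability that an arriving copy of $i$ both currently suggests $j$ and finds $j$ unmatched, to get the expected weight ESM collects on $e$, and show this is at least $\rho \cdot w_e x_e$ with $\rho$ the competitive ratio; (4) reduce the per-edge guarantee to a finite optimization over the stage parameters and the ``local'' configuration of how much fractional mass competes for $j$ at each stage, and verify numerically that the optimal choice yields $\rho \ge 0.650$; (5) translate back from the Poisson model to the original stochastic model, absorbing the $o(1)$ error terms using the ``sufficiently large number of arrivals'' hypothesis.

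The main obstacle I expect is step (3)--(4): bounding the contribution of a single edge $e=(i,j)$ when the worst case is that $j$ has many competing neighbors whose suggestions into $j$ are front-loaded (so $j$ gets matched early, before $i$'s stage in which it suggests $j$ begins) — or conversely that $i$'s suggestion only moves to $j$ late. One must argue that the evolving/multistage reassignment rule hedges against both extremes simultaneously: if $j$'s early demand is high then $j$ being matched early is ``good'' for the global objective even if not via $e$, and the LP feasibility of $x$ forces a trade-off. Making this precise requires a careful accounting that charges the weight ESM misses on $e$ against weight it gains elsewhere, i.e.\ a charging/primal argument rather than a naive edge-by-edge bound, and the stage parameters must be chosen to make the worst-case configuration of this trade-off attain exactly $0.650$. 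A secondary technical point is justifying that the differential-equation analysis is valid — i.e.\ that correlations between the unmatched events of distinct offline vertices only help (or are negligible in the Poisson limit) — which typically follows from a negative-association or monotonicity argument in the Poisson model but needs to be stated carefully.
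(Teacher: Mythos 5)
Your high-level skeleton (reduce to the Poisson model via Huang--Shu, use an LP benchmark, lower bound per-edge matching probabilities through survival probabilities $\Pr{U_j(t)=1}$, then optimize time-dependent parameters numerically) matches the paper, but there are two concrete gaps that would derail the execution. First, you have the algorithm itself wrong: ESM does \emph{not} reassign an online type's suggested neighbor to a ``fresher'' target according to a residual fractional solution as vertices saturate. After Yan's reduction to kernel instances (which your outline omits), every type has degree one or two; first-class types always propose to their unique neighbor, and a second-class type picks its first choice uniformly at random from its two neighbors, proposing to it with probability $\min\{f(t),1\}$ and to the second choice with probability $\max\{f(t)-1,0\}$ if the first is taken. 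The only thing that ``evolves'' is the scalar activation function $f$ (Yan's MSM is the three-stage special case, not two-stage). Relatedly, the benchmark must be the Jaillet--Lu LP: its third constraint forces $y_j\le 1-\ln 2$ on the first-class mass at each offline vertex, and this is exactly what makes a per-edge guarantee above $1-1/e$ possible for first-class edges with $x_{ij}$ close to $1$. With a generic matching LP, the naive edge-by-edge bound you worry about in step (3)--(4) really is impossible, but the paper's resolution is the kernel reduction plus this LP constraint, not the charging/primal argument you propose; the paper stays entirely within the per-edge framework of its analysis lemma.

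Second, the correlation issue you defer as a ``secondary technical point'' is in fact the paper's main technical content, and your proposed fixes would not work. The events are neither negligible in the Poisson limit nor handled by negative association: what is needed is an upper bound on $\Pr{U_{j'}(x)=0 \mid U_j(x)=1}$ (a competitor of $j$ being \emph{matched} given $j$ is unmatched), i.e.\ a positive-correlation-type statement for unmatched events, which the paper explicitly states it cannot prove in full and replaces by a weaker ``pseudo-positive correlation'' bound $\Pr{U_{j'}(x)=1\mid U_j(x)=1}\ge e^{-F(1)}$, proved by a coupling over independent extended Poisson types $i(j_1,\bot)$, $i(j_1,j_2)$ and requiring the design constraint $F(1)\ge 1$. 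A separate relaxation, $\Pr{U_j(t)=1,U_{j'}(t)=1}\le e^{-y_j t^*-(2-y_j)F(t^*)-2(t-t^*)}$, is needed to control the subtraction term for second-choice proposals on second-class edges. Without these two ingredients the differential-equation lower bound on $\Pr{U_j(t)=1}$ for $t>t^*$ cannot be closed, and the subsequent derivative-monotonicity and discretization step (restricting the worst case to $y_j=1-\ln 2$ and searching over piecewise-constant $f$) has nothing to optimize. So the proposal, as written, does not yield a proof of the $0.650$ bound for this algorithm.
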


\begin{remark}
   The result follows from the asymptotic equivalence between the stochastic arrival model and the Poisson arrival model established by Huang and Shu\cite{conf/stoc/HuangS21}. For simplicity, we analyze the performance of our ESM algorithm under the Possion arrival model (the formal definition can be found in \Cref{sec:prelim}) instead.
\end{remark}

Our work follows the common framework that formulates the matching in the graph $G=(I\cup J,E)$ into a Linear Program (LP) and uses the corresponding pre-computed optimal solution $\boldsymbol{x}$ to guide the design of our online algorithm.
It can be shown that if we can design an algorithm that matches each edge $(i,j)\in E$ with probability at least $\alpha\cdot x_{ij}$, then the algorithm is $\alpha$-competitive. 
The LP is customized for different problems, and we use the LP proposed by Jaillet and Lu~\cite{journals/mor/JailletL14} in this paper.
By a reduction from~\cite{Y22}, it suffices to consider some kernel instances in which all online vertex types have degree at most two.
We call an online vertex type with one neighbor a \emph{first-class} type and an online vertex type with two neighbors a \emph{second-class} type.
The \textsf{Multistage Suggested Matching} (MSM) algorithm proposed by Yan~\cite{Y22} is a hybrid of the \textsf{Suggested Matching} and the \textsf{Two-Choice} algorithms. In the MSM algorithm, the second-class online vertices follow different strategies at three different stages of the algorithm. 
Inspired by the MSM algorithm, we propose the \textsf{Evolving Suggested Matching} (ESM) algorithm by introducing an activation function that allows us to have theoretically infinitely many different ``stages'' in the algorithm.

\medskip
\noindent
\textbf{Technical Contribution 1: Evolving Strategies by Activation Function}.
With the activation function, our algorithm is able to evolve the matching strategies in time horizon. As discussed in \Cref{sec:ESM}, the activation function controls how ``aggressive'' the second-class online vertex types propose to their neighbors and is general enough to capture many of the existing algorithms for the online stochastic matching problem, including the \textsf{Suggested Matching} algorithm, the \textsf{Two-Choice} algorithm and that of Yan~\cite{Y22}. 
Furthermore, the design of a better online algorithm can thus be reduced to the design of the activation function in a more tractable way, which is crucial to refining the state-of-the-art competitive ratio.


\smallskip
\noindent
\textbf{Technical Contribution 2: Fine-grained Correlation Analysis.}
One particular difficulty in competitive analysis by introducing the activation function is that most of the matching events are now intricately correlated (as opposed to the \textsf{Suggested Matching} algorithm~\cite{conf/focs/FeldmanMMM09}), and in order to fully utilize the power of the activation function (as opposed to the analysis for the MSM algorithm~\cite{Y22}), we need to carefully bound the matching probability of a vertex or an edge conditioned on the matching status of other vertices or edges. Essentially, all analysis for the two proposals algorithm needs to take into account the failure of the first proposal affected by the pairwise correlations. Existing works~\cite{conf/stoc/HuangS21,Y22} get around this issue by specific relaxation of the correlated events and thus incur some intrinsic loss. 

To improve the competitive analysis, it is inevitable to conduct a fine-grained correlation analysis of the matching probability of a vertex or an edge. As shown in our analysis (in Section~\ref{sec:ESM}), we conjecture that the offline vertices being unmatched up to some time $t$ are positively correlated. However, it is difficult to capture the correlation between the matching events of offline vertices as there are many places where the random decisions are dependent. Instead, we observe that by abstracting the independent arrivals of the ``extended online vertices types'', we can prove a pseudo-positive correlated inequality available for our analysis. We believe this observation is technically interesting and may lead to further inspiration in future works.

\subsection{Other Related Works}

For the online bipartite matching problem under the adversarial arrival model, the optimal competitive ratio $1-1/e$ was proved in~\cite{journals/sigact/BirnbaumM08,conf/soda/GoelM08,conf/soda/AggarwalGKM11,conf/soda/DevanurJK13} using different analysis techniques (even in the vertex-weighted version). For the problem under random arrivals, it is assumed that the adversary decides the underlying graph but the online vertices arrive following a uniformly-at-random chosen order.
For the unweighted version of this model, the competitive ratios $0.653$ and $0.696$ have been proved by Karande et al.~\cite{conf/stoc/KarandeMT11} and Mahdian and Yan~\cite{conf/stoc/MahdianY11} respectively.
For the vertex-weighted version of the problem, the competitive ratios $0.653$ and $0.662$ were proved by Huang et al.~\cite{journals/talg/HuangTWZ19} and Jin and Williamson~\cite{conf/wine/JinW21}, respectively.
It is well-known that there exists no competitive algorithm for the edge-weighted online bipartite matching problem. Consequently, the edge-weighted version of the problem has attracted attention when additional assumptions are considered. One significant variant is often referred to as the edge-weighted online bipartite matching problem with free disposal, where each offline vertex can be matched repeatedly, but only the heaviest edge matched to it contributes to the objective. 
For the edge-weighted version of the problem with free disposal under the adversarial arrival order, Fahrbach et al.~\cite{journals/jacm/FahrbachHTZ22} have demonstrated a competitive ratio of $0.5086$, which surpasses the long-standing barrier of $0.5$ achieved by the greedy algorithm. Later, Huang et al.~\cite{conf/stoc/0002SY22} further improved the competitive ratio for this problem to $0.706$ in the stochastic setting. Given the significance of the online bipartite matching problem, recent years have witnessed the exploration of more general arrival models beyond the adversarial and stochastic vertex arrival models. These include the general vertex arrival model~\cite{conf/focs/GamlathKMSW19,conf/icalp/WangW15}, the fully online model~\cite{journals/jacm/HuangKTWZZ20,conf/focs/0002T0020}, and the edge arrival model~\cite{conf/focs/GamlathKMSW19,conf/icalp/GravinTW21}.

\section{Preliminaries} \label{sec:prelim}

We consider the edge-weighted online stochastic matching problem.
An instance of the problem consists of a bipartite graph $G = (I \cup J,E)$, a weight function $w$, and the arrival rates $\{ \lambda_i \}_{i\in I}$ of the online vertex types.
In the bipartite graph $G = (I\cup J,E)$, $I$ denotes the set of online vertex types and $J$ denotes the set of offline vertices.
The set $E\subseteq I\times J$ contains the edges between $I$ and $J$, where each edge $(i,j)\in E$ has a non-negative weight $w_{ij}$.
In the stochastic model, each online vertex type $i\in I$ has an arrival rate $\lambda_i$ and $\Lambda=\sum_{i\in I} \lambda_i$.
Online vertices arrive one by one and each of them draws its type $i$ with probability $\frac{\lambda_i}{\Lambda}$ independently.
Any online algorithm must make an immediate and irrevocable matching decision upon the arrival of each online vertex, with the goal of maximizing the total weight of the matching, subject to the constraint that each offline vertex can be matched at most once. 
Throughout, we use $\opt$ to denote the weight of the maximum weighted matching of the realized instance; and $\alg$ to denote the weight of the matching produced by the online algorithm.
Note that both $\alg$ and $\opt$ are random variables, where the randomness of $\opt$ comes from the random realization of the instance while that of $\alg$ comes from both the realization and the random decisions by the algorithm.
The competitive ratio of the algorithm is measured by the infimum of $\frac{\E{\alg}}{\E{\opt}}$ over all problem instances.

\paragraph{Poisson Arrival Model.} 
Instead of fixing the number of online vertices, in the Poisson arrival model, the online vertex of each type $i$ arrives independently following a Poisson process with time horizon $[0,1]$ and arrival rate $\lambda_i$. 
The independence property allows a more convenient competitive analysis.
In this paper, we consider the problem under the Poisson arrival model. Specifically, we show that
\begin{theorem}\label{theo:competitive-ratio}
     The \textsf{Evolving Suggested Matching} algorithm is $0.650$-competitive for edge-weighted online stochastic matching under the Poisson arrival model.
\end{theorem}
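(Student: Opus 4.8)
The plan is to instantiate the LP-rounding framework sketched above. First I would fix the Jaillet--Lu LP for the instance $(G,w,\{\lambda_i\}_{i\in I})$, let $\boldsymbol{x}$ be an optimal fractional solution, and invoke the reduction of Yan~\cite{Y22} so that every online type $i\in I$ has degree at most two; thus $I$ splits into \emph{first-class} types (a single neighbor) and \emph{second-class} types (two neighbors). Since it suffices to exhibit an algorithm that matches each edge $(i,j)\in E$ with probability at least $0.650\cdot x_{ij}$, weighting by $w_{ij}$ and summing then gives $\E{\alg}\ge 0.650\cdot\E{\opt}$.

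Next I would pin down the ESM algorithm through an activation function $g\colon[0,1]\to[0,1]$. An arriving vertex of a first-class type proposes to its unique neighbor and matches it if free, exactly as in \textsf{Suggested Matching}. An arriving vertex of a second-class type $i$ with neighbors $j_1,j_2$ first proposes to its primary neighbor $j_1$ according to $\boldsymbol{x}$; if $j_1$ is already matched, then with probability $g(t)$ it \emph{activates} and proposes to $j_2$. Taking $g\equiv 0$ recovers \textsf{Suggested Matching}, $g\equiv 1$ recovers \textsf{Two-Choice}, and a three-level step function recovers the MSM algorithm of~\cite{Y22}; the point is that leaving $g$ unconstrained yields effectively infinitely many ``stages'' to optimize over.

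The heart of the argument is to control, for each offline vertex $j$, the quantity $p_j(t)\defeq\Pr{j\text{ unmatched at time }t}$, together with the joint probabilities needed in the two-choice case. Under the Poisson model I would derive that $-\frac{d}{dt}p_j(t)$ equals the total effective rate at which proposals reach $j$ at time $t$: contributions from first-class types incident to $j$, from second-class types having $j$ as primary neighbor, and from second-class types having $j$ as secondary neighbor which activate; the last contribution carries the extra factor $g(t)$ times the probability that the proposing type's primary neighbor is already matched. This is exactly where correlations enter: the event ``$j$ unmatched'' and the event ``the primary neighbor of the proposing second-class type is matched'' are not independent, and the crude decoupling loses too much to reach $0.650$. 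The main obstacle --- and the step I expect to absorb most of the work --- is establishing the pseudo-positive-correlation inequality promised in the introduction: by re-encoding the process as independent arrivals of suitably defined ``extended online vertex types'' one shows that the events that offline vertices remain unmatched are correlated in the favorable direction, which lets the differential inequality be closed with enough slack.

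Finally I would carry out the per-edge accounting. For an edge $(i,j)$ with $i$ first-class, the matched probability is $\int_0^1 \lambda_i\, p_j(t)\,dt$, and the lower bound on $p_j(t)$ combined with the LP constraints yields $0.650\cdot x_{ij}$. For an edge $(i,j)$ with $i$ second-class I would split the contribution according to whether $j$ is the primary neighbor (matched on the first proposal) or the secondary neighbor (matched on activation), bounding each integral using the estimates on $p_j$ and the correlation inequality. Optimizing the resulting worst case over the structural parameters of $j$ and its neighbors reduces to a low-dimensional optimization over the activation function $g$; a suitable increasing choice of $g$ drives the worst-case ratio to $0.650$, and by the remark following \Cref{theo:competitive-ratio-origin} the same bound transfers back to the original stochastic model. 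The concluding numerical optimization is routine and I would defer it to an appendix.
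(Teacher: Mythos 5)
Your overall framework matches the paper's: Jaillet--Lu LP, Yan's reduction to kernel instances, per-edge lower bounds of the form $0.650\cdot x_{ij}$, extended online vertex types to restore independence, a correlation inequality for the second-proposal analysis, and a final numerical optimization over the activation function. However, there is a concrete gap in how you define the algorithm, and it is fatal to reaching $0.650$. You take an activation function $g\colon[0,1]\to[0,1]$ that only gates the \emph{second} proposal: every second-class arrival always proposes to its primary neighbor. In a kernel instance this means every offline vertex $j$ receives ``unconditional'' proposals (first-class arrivals plus first-choice proposals of second-class neighbors) at total rate $y_j+(1-y_j)=1$ from time $0$ onward, so $\Pr{U_j(t)=1}\le e^{-t}$ for all $t$, and for a first-class edge $(i,j)$ your own accounting gives $\Pr{M_{ij}=1}/x_{ij}=\int_0^1\Pr{U_j(t)=1}\,dt\le 1-1/e\approx 0.632<0.650$. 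This is exactly the obstruction the paper highlights when motivating MSM's first (discard) stage. The paper's activation function has range $[0,2]$ and gates the \emph{first} proposal with probability $\min\{f(t),1\}$ (and the second with probability $\max\{f(t)-1,0\}$), so that early in the time horizon second-class vertices can be partially or fully suppressed; this throttling is what keeps offline vertices unmatched long enough to protect first-class edges, and it is why your claim that a three-level step function of $g$ recovers MSM is also incorrect (MSM's first stage discards second-class arrivals entirely, which your $g$ cannot emulate). No choice of $g$ in your parameterization, nor any correlation analysis, can push the first-class-edge guarantee above $1-1/e$.

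Two further points that your sketch leaves unaddressed but that the paper needs: for a second-class edge the matched probability contains a \emph{negative} correction term involving the joint probability that both neighbors of $i$ are still unmatched, so one needs an \emph{upper} bound on $\Pr{U_j(t)=1,U_{j'}(t)=1}$ (obtained in the paper from independence of the extended-type Poisson processes), in addition to the lower bound on single-vertex unmatched probabilities; and the ``favorable correlation'' you invoke is made precise in the paper as the conditional bound $\Pr{U_{j'}(x)=0\mid U_j(x)=1}\le 1-e^{-F(1)}$ (proved by a coupling argument and requiring the side condition $F(1)\ge 1$, which then becomes a constraint in the optimization over the activation function). Your plan gestures at these steps but, as written, the algorithm you would be analyzing cannot attain the claimed ratio, so the proposal does not establish the theorem.
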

Together with the asymptotic equivalence analysis in~\cite{conf/stoc/HuangS21}, \Cref{theo:competitive-ratio} implies \Cref{theo:competitive-ratio-origin}.

\paragraph{Jaillet-Lu LP.}
We use the following linear program $\text{LP}_{\text{JL}}$ proposed by Jaillet and Lu~\cite{journals/mor/JailletL14} to bound the expected offline optimal value for instance $(G,w,\{ \lambda_i \}_{i\in I})$:
\begin{align*}
    \text{maximize} \quad & \sum_{(i,j)\in E} w_{ij}\cdot x_{ij} \\
    \text{subject to} \quad & \sum_{j : (i,j)\in E} x_{ij} \leq \lambda_i, & \forall i\in I ; \\
    \quad & \sum_{i : (i,j)\in E} x_{ij} \leq 1, & \forall j\in J ; \\
    \quad & \sum_{i : (i,j)\in E} \max\{2 x_{ij}-\lambda_i, 0\} \leq 1-\ln 2, & \forall j\in J ; \\
    & x_{ij}\geq 0, \quad & \forall (i,j) \in E . 
\end{align*}

We use $x_i$ and $x_j$ to denote $\sum_{j: (i,j) \in E} x_{i j}$ and $\sum_{i: (i,j) \in E} x_{i j}$, respectively.
Note that for any feasible solution $\boldsymbol{x}$ we have $x_i \leq \lambda_i$ for all $i\in I$ and $x_j \leq 1$ for all $j\in J$. We remark that although the third set of constraints are not linear, they can be transformed into an LP problem by applying the standard technique of introducing auxiliary variables.


\begin{lemma}[Analysis Framework] \label{lemma:analysis-framework}
    If an online algorithm matches each edge $(i,j)\in E$ with probability at least $\alpha\cdot x_{i j}$ for any instance $(G,w, \{ \lambda_i \}_{i\in I})$, where $\boldsymbol{x}$ is the optimal solution to the above LP, then the algorithm is $\alpha$-competitive under the Poisson arrival model.
\end{lemma}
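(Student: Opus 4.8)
The plan is to prove the lemma by relating both $\E{\alg}$ and $\E{\opt}$ to the optimal value of $\text{LP}_{\text{JL}}$, sandwiching the competitive ratio from below. Write $\mathrm{LP}^{*} := \sum_{(i,j)\in E} w_{ij}\, x_{ij}$ for the LP optimum, which is attained at the given solution $\boldsymbol{x}$. I will establish the two inequalities $\E{\alg} \ge \alpha\cdot\mathrm{LP}^{*}$ and $\mathrm{LP}^{*}\ge \E{\opt}$; chaining them yields $\E{\alg}\ge\alpha\cdot\E{\opt}$ for every instance, which is exactly the claim.

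For the first inequality, let $p_{ij}$ denote the probability that the algorithm matches the edge $(i,j)$. By linearity of expectation, $\E{\alg}=\sum_{(i,j)\in E} w_{ij}\, p_{ij}$. Since all weights are non-negative and, by hypothesis, $p_{ij}\ge \alpha\, x_{ij}$ for every edge, this gives $\E{\alg}\ge \alpha\sum_{(i,j)\in E} w_{ij}\, x_{ij} = \alpha\cdot\mathrm{LP}^{*}$. This step is pure bookkeeping with linearity of expectation.

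For the second inequality I exhibit a feasible solution to $\text{LP}_{\text{JL}}$ whose objective equals $\E{\opt}$. For each realization of the Poisson arrivals, fix a maximum-weight matching $M^{*}$ (breaking ties by an arbitrary fixed rule so that $M^{*}$ is a well-defined function of the realization), and set $\hat{x}_{ij} := \Pr{(i,j)\in M^{*}}$. Then $\sum_{(i,j)\in E} w_{ij}\,\hat{x}_{ij} = \E{\opt}$ by linearity of expectation, so it remains to verify the three constraint families. The first two are straightforward: each offline vertex $j$ is incident to at most one edge of $M^{*}$, so $\sum_{i:(i,j)\in E}\hat{x}_{ij}\le 1$; and the number of offline vertices matched in $M^{*}$ to online vertices of type $i$ never exceeds the number of type-$i$ arrivals, whose expectation under the Poisson process is $\lambda_i$, so $\sum_{j:(i,j)\in E}\hat{x}_{ij}\le \lambda_i$. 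The third family, $\sum_{i:(i,j)\in E}\max\{2\hat{x}_{ij}-\lambda_i,0\}\le 1-\ln 2$, is precisely the inequality isolated by Jaillet and Lu~\cite{journals/mor/JailletL14} (used here in the Poisson form underlying~\cite{conf/stoc/HuangS21}); morally, it encodes that an offline vertex $j$ cannot in expectation be matched too often through online types whose copies arrive at least twice, because such a repeated arrival ``wastes'' the matching capacity $\lambda_i$ of that type. I will invoke this inequality as a black box. Feasibility of $\hat{x}$ then gives $\E{\opt}=\sum_{(i,j)\in E} w_{ij}\,\hat{x}_{ij}\le \mathrm{LP}^{*}$.

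Combining the two bounds yields $\E{\alg}\ge\alpha\cdot\mathrm{LP}^{*}\ge\alpha\cdot\E{\opt}$, and since the argument is uniform over all problem instances, the competitive ratio is at least $\alpha$. The only non-routine ingredient is the validity of the third (Jaillet--Lu) constraint for the fractional matching $\hat{x}$ — which is exactly why we adopt $\text{LP}_{\text{JL}}$ — and this is a known fact from prior work rather than something we reprove; every other step reduces to linearity of expectation and the definition of a maximum-weight matching.
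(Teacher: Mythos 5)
Your proof is correct and follows essentially the same route as the paper: bound $\E{\alg}$ below by $\alpha$ times the LP optimum using the hypothesis and linearity of expectation, then bound $\E{\opt}$ above by the LP optimum via the feasible solution $\hat{x}_{ij}=\Pr{(i,j)\in M^{*}}$, with the third (Jaillet--Lu) constraint's feasibility cited from prior work exactly as the paper does (it points to the appendix of Huang and Shu). No gaps worth noting.
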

\begin{proof}
    Since the algorithm matches each edge $(i,j)\in E$ with probability at least $\alpha\cdot x_{i j}$, the expected total weight of the matching produced by the algorithm is
    \begin{align*}
        \E{\alg}=&\sum_{(i,j)\in E}w_{ij}\cdot  \Pr{(i,j) \text{ is matched by the algorithm}}\\
        \geq & \sum_{(i,j)\in E}w_{ij}\cdot \alpha\cdot x_{ij} = \alpha\cdot P^* \geq \alpha \cdot \E{\opt},
    \end{align*}
    where $P^*$ denotes the objective of the optimal solution $\boldsymbol{x}$.
    The last inequality follows because by defining $x'_{ij}$ to be the probability that edge $(i,j)$ is included in the maximum weighted matching of the realized instance, we can obtain a feasible solution\footnote{The proof showing that $\boldsymbol{x}'$ satisfies all constraints (especially the third set of constraints) can be found in the appendix of~\cite{conf/stoc/HuangS21}} with objective $\E{\opt}$. Here, the feasibility of the third set of constraints can be briefly explained as follows: Under the Poisson arrival model, as shown in~\cite{conf/stoc/0002SY22}, it holds that for any offline vertex $j$ and any subset of online vertex types $S$ that are adjacent to $j$, the probability of $j$ getting matched among $S$ is bounded by $1-e^{-\sum_{i\in S}\lambda_i}$. Together with the converse of Jensen’s inequality proposed by ~\cite{conf/stoc/HuangS21,conf/stoc/0002SY22}, the third set of constraints holds.
\end{proof}

Following the above lemma, in the rest of the paper, we focus on lower bounding the minimum of $\frac{1}{x_{i j}}\cdot \Pr{(i,j) \text{ is matched by the algorithm}}$ over all edges $(i,j)\in E$.

\paragraph{Suggested Matching.} 
We first give a brief review of the edge-weighted version of the \textsf{Suggested Matching} algorithm, proposed by Feldman et al.~\cite{conf/focs/FeldmanMMM09}.
The algorithm starts from an optimal solution to a linear program\footnote{The LP used in~\cite{conf/focs/FeldmanMMM09} is similar to the Jaillet-Lu LP defined above but without the third set of constraints.}, in which $x_{i j}$ is the corresponding variable for edge $(i,j)\in E$.
The \textsf{Suggested Matching} algorithm proceeds as follows: when an online vertex of type $i$ arrives, it chooses an offline neighbor $j$ with probability ${x_{i j}}/{\lambda_{i}}$ and propose to $j$.
Note that by the feasibility of solution $\boldsymbol{x}$, the probability distribution is well-defined.
If $j$ is not matched, then the algorithm includes $(i,j)$ into the matching.
It can be shown that each edge $(i,j)$ will be matched with probability at least $(1-1/e)\cdot x_{ij}$ when the algorithm terminates, which implies a competitive ratio of $1-1/e$.

\paragraph{Two-Choice.}
The idea of \emph{power of two choices} was first introduced by Feldman et al.~\cite{conf/focs/FeldmanMMM09}, in the algorithm \textsf{Two Suggested Matching}.
Generally speaking, the idea is to allow each online vertex to choose two (instead of one) neighbors, and match one of them if any of these two neighbors is unmatched.
Formally, the \textsf{Two-Choice} algorithm is described as follows: upon the arrival of each online vertex of type $i\in I$, the algorithm chooses two different offline neighbors $j_1, j_2$ following a distribution defined by some optimal solution to an LP.
If $j_1$ is unmatched, then the algorithm matches $i$ to $j_1$; otherwise, the algorithm matches $i$ to $j_2$ if $j_2$ is unmatched.
We call $j_1$ and $j_2$ the \emph{first-choice} and \emph{second-choice} of $i$, respectively.
As introduced, the \textsf{Two-Choice} algorithm has achieved great success in the research field of online stochastic matching problems.

\section{Multistage Suggested Matching}

As a warm-up, we first briefly review the recent progress on the edge-weighted online stochastic matching problem by Yan~\cite{Y22}. 

\paragraph{Kernel Instances.}
We call an instance consisting of graph $G=(I\cup J, E)$, arrival rates $\{\lambda_i\}_{i\in I}$, and a fractional matching $\boldsymbol{x}$ of $\text{LP}_{\text{JL}}$ a \emph{kernel} instance if (1) there are only two classes of online vertex types: one with a single offline neighbor $j$ such that $x_{ij}=\lambda_i$, and the other with two offline neighbors $j_1, j_2$ such that $x_{i j_1}=x_{i j_2}=\frac{1}{2}\lambda_i$; (2) for any offline vertex $j\in J$, we have $x_{j} = 1$.

\medskip

Yan~\cite{Y22} showed that if there exists an online algorithm on the kernel instances such that for any edge $(i,j)\in E$, the probability that $(i,j)$ gets matched by the algorithm is at least $\alpha\cdot x_{ij}$, then we can transform this algorithm to an $\alpha$-competitive algorithm for general problem instances. Specifically, for any general problem instances with an optimal fractional matching $\boldsymbol{x}$ of $\text{LP}_{\text{JL}}$, we can first assume that $x_i=\lambda_i$ for any online vertex type and $x_j=1$ for any offline vertex. Otherwise, a satisfying instance can be constructed by introducing some dummy online vertex types and offline vertices with specific fractional matching on the involved dummy edges such that the objective and the feasibility of the fractional matching are preserved. To reduce the number of online vertex types, Yan~\cite{Y22} proposed a split scheme to split each online vertex type into subtypes and pair up the fractional matching on their edges systematically for its feasibility. The split scheme can be simulated by the downsampling of the online vertex types.

In the rest of this paper, we only consider the kernel instances. 
We call an edge $(i,j)$ a \emph{first-class} edge if $x_{ij}=\lambda_i$, or a \emph{second-class} edge if $x_{ij}=\frac{\lambda_i}{2}$. 
If the online vertex type has an incident first-class (resp. second-class) edge, we call it a first-class (resp. second-class) online vertex type. 
For each offline vertex $j$, we use $N_1(j)$ and $N_2(j)$ to denote the set of first-class and second-class neighbors of $j$, respectively.
Let $y_j:=\sum_{i \in N_1(j)} x_{ij}$ be the sum of variables corresponding to first-class edges incident to $j$.
Note that for the kernel instance, $y_j$ also denotes the total arrival rate of first-class neighbors of $j$.
By the feasibility of solution $\boldsymbol{x}$ we have:
\begin{equation*}
    y_j \leq 1-\ln 2 , \qquad  \forall j\in J.
\end{equation*}

\paragraph{Multistage Suggested Matching Algorithm.}

While the \textsf{Two-Choice} algorithm provides good competitive ratios for the unweighted and vertex-weighted online stochastic matching problems~\cite{conf/focs/FeldmanMMM09,journals/algorithmica/BrubachSSX20a,conf/stoc/HuangS21}, it cannot be straightforwardly applied to the edge-weighted version of the problem.
Specifically, traditional analysis for the unweighted or vertex-weighted versions of the problem focuses on lower bounding the probability of each offline vertex being matched by the algorithm.
However, for the edge-weighted version we need to lower bound the probability of each edge being matched (see Lemma~\ref{lemma:analysis-framework}), and it is not difficult to construct examples of kernel instances in which some (first-class) edge $(i,j)\in E$ is matched with probability strictly less than $(1-1/e)\cdot x_{ij}$ in the \textsf{Two-Choice} algorithm.
In contrast, the \textsf{Multistage Suggested Matching} (MSM) algorithm proposed by Yan~\cite{Y22} is a hybrid of the \textsf{Suggested Matching} and the \textsf{Two-Choice} algorithms.
Specifically, in the MSM algorithm, the first-class vertices always follow the \textsf{Suggested Matching} algorithm while the second-class vertices follow different strategies at different stages of the algorithm.
In the first stage of the algorithm, all second-class vertices are discarded; in the second stage the second-class vertices follow the \textsf{Suggested Matching} algorithm; and in the last stage they follow the \textsf{Two-Choice} algorithm.
Intuitively speaking, the second-class vertices are getting more and more aggressive in terms of trying to match their neighbors, as time goes by.
The design of the first stage is crucial because without this stage the matching probability of a first-class edge $(i,j)\in E$ with $x_{ij} = 1$ will be at most $1-1/e$.
The design of the third stage is also important because without it the performance of the algorithm will not be better than the \textsf{Suggested Matching} algorithm, e.g., on the second-class edges.
By carefully leveraging the portions of the three stages, they show that their algorithm is at least $0.645$-competitive.

\section{Evolving Suggested Matching Algorithm}
	\label{sec:ESM}
	
In this section, we propose the \textsf{Evolving Suggested Matching} (ESM) algorithm that generalizes several existing algorithms for the online stochastic matching problem, including that of Yan~\cite{Y22}.
	The algorithm is equipped with a non-decreasing \emph{activation} function $f: [0,1] \to [0,2]$.
	We first present the algorithm in its general form and then provide a lower bound on the competitive ratio in terms of the function $f$.
	By carefully fixing the activation function (in the next section), we show that the competitive ratio is at least $0.650$.
	
	\subsection{The Algorithm} \label{ssec:algorithm}
	
	Our algorithm is inspired by the MSM algorithm from \cite{Y22}, in which second-class edges follow different matching strategies at different stages of the algorithm.
	The high-level idea of our algorithm is to introduce a non-decreasing \emph{activation} function $f: [0,1] \to [0,2]$ to make this transition happen smoothly.
	As in \cite{Y22}, we only consider the kernel instances in which each online vertex type is either first-class (having only one neighbor) or second-class (having exactly two neighbors).
	In the ESM algorithm, when an online vertex of type $i$ arrives at time $t\in [0,1]$,
	\begin{itemize}
		\item if $i$ is a first-class type, then it proposes to its unique neighbor $j$. That is, if $j$ is unmatched then we include the edge $(i,j)$ into the matching; otherwise $i$ is discarded.
		\item if $i$ a second-class type, then it chooses a neighbor $j_1$ uniformly at random as its \emph{first-choice}, and let the other neighbor $j_2$ be its \emph{second-choice}.
		Then with probability $\min\{ f(t),1 \}$, $i$ proposes to $j_1$.
		If the proposal is made and $j_1$ is unmatched then the edge $(i,j_1)$ is included in the matching and this round ends; if the proposal is made but $j_1$ is already matched, then $i$ proposes to $j_2$ with probability $\max\{ f(t)-1, 0 \}$.
	\end{itemize}
	
	The detailed description of the algorithm can be found in \Cref{algo:ESM}.
	
	\begin{algorithm}[th]
		\caption{Evolving Suggested Matching algorithm}
		\label{algo:ESM}
		\KwIn{A kernel instance with graph $G=(I\cup J, E)$, arrival rates $\{ \lambda_i \}_{i\in I}$, the optimal solution $\boldsymbol{x}$ to $\text{LP}_{\text{JL}}$, and an activation function $f$.}
		\KwOut{A matching $\+M$.}
		Initialize $\+M=\emptyset$ to be an empty matching\;
		\For{each online vertex of type $i$ arriving at time $t\in [0,1]$}
		{
			\eIf{$i$ is a first-class online vertex type}
			{
				\tcp{Propose to its unique first-class neighbor $j$}
				\If{$j$ is unmatched}
				{
					$\+M\gets \+M\cup \set{(i,j)}$\;
				}  
			}
			{ 
				choose a neighbor $j_1$ uniformly at random and let $j_2$ be the other neighbor\;
				$r_1,r_2\sim \!{Unif}[0,1]$\;
				\If{$ r_1\leq f(t) $}
				{
					\tcp{Propose to $j_1$} 
					\If {$j_1$ is unmatched}
					{
						$\+M\gets \+M\cup \set{(i,j_1)}$\;     
					}
					\ElseIf{$r_2\leq f(t)-1$}
					{
						\tcp{Propose to $j_2$}
						\If{$j_2$ is unmatched}
						{
							$\+M\gets \+M\cup \set{(i,j_2)}$\;
						}
					}
				}
			}
		}
		\Return{$\+M$.}
	\end{algorithm}
	
	We remark that the activation function $f$ controls how ``aggressive'' the second-class online vertex types propose to their neighbors.
	For example, when $f(t) = 0$, the second-class online vertex arriving at time $t$ will be discarded immediately without making any matching proposal; if $f(t) = 1$ then it will only propose to its first-choice; if $f(t) = 2$ then it will first propose to its first-choice and if the proposal is unsuccessful, then it will propose to its second-choice.
	Therefore, with different choices of the activation function, the ESM is general enough to capture many of the existing algorithms for the online stochastic matching problem, including the \textsf{Suggested Matching} algorithm (with $f(t) = 1$ for all $t\in[0,1]$); the \textsf{Two-Choice} algorithm (with $f(t) = 2$ for all $t\in[0,1]$) and that of Yan~\cite{Y22} (with $f(t) = 0$ when $t\leq 0.05$; $f(t)=1$ when $t\in (0.05, 0.75)$ and $f(t) = 2$ when $t\geq 0.75$).
	In the following, we derive a lower bound on the competitive ratio of the algorithm in terms of the activation function $f$.
	A specific choice of $f$ will be decided in the next section to optimize the lower bound on the ratio.
	
	\subsection{Extended Online Vertex Types}
	
	For convenience of analysis, in the following, we make use of the properties of the Poisson process and present an equivalent description of the ESM algorithm.
	Specifically, upon the arrival of a second-class online vertex of type $i$ at time $t$, suppose that $j_1$ is chosen as the first-choice and $j_2$ is the second-choice.
	\begin{itemize}
		\item If $r_1 > f(t)$ then we call the online vertex of type $i(\bot, \bot)$, indicating that it will not propose to any of its two choices;
		\item If $r_1 \leq f(t)$ and $r_2 \geq f(t)-1$ then we call the online vertex of type $i(j_1, \bot)$, indicating that it will only propose to its first-choice;
		\item If $r_1 \leq f(t)$ and $r_2 < f(t)-1$ then we call the online vertex of type $i(j_1, j_2)$, indicating that it will propose to its both choices unless it gets matched. 
	\end{itemize}
	
	Recall that type $i$ arrives following a Poisson process with rate $\lambda_i$.
	Hence the aforementioned types arrive following Poisson processes with time-dependent arrival rates (depending on the activation function $f$).
	Since the first-choice is chosen uniformly at random between the two neighbors, and $r_1,r_2$ are uniformly distributed in $[0,1]$, we can characterize the arrival rates of each specific extended vertex type as follows.
	
	\begin{proposition}\label{prop:extendedpoisson}
		For any second-class online vertex type $i$ with neighbors $\set{j_1,j_2}$, at time $t\in [0,1]$
		\begin{itemize}
			\item the arrival rate of type $i(j_1,\bot)$ and $i(j_2,\bot)$ are both $\frac{\lambda_i}{2}\cdot \min\{ f(t),1 \}\cdot \min\{ 2 - f(t), 1 \}$;
			\item the arrival rate of type $i(j_1, j_2)$ and $i(j_2, j_1)$ are both $\frac{\lambda_i}{2}\cdot \min\{ f(t),1 \}\cdot \max\{ f(t) - 1, 0 \}$.
		\end{itemize}
		
		Moreover, the Poisson processes describing the arrivals of online vertex of type $i(j,j')$, for all $i\in I$ and $j,j'\in J\cup \{\bot\}$ are independent. 
	\end{proposition}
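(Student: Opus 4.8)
The plan is to obtain Proposition~\ref{prop:extendedpoisson} from the standard \emph{coloring} (a.k.a.\ \emph{marking}/\emph{thinning}) theorem for Poisson processes, in its inhomogeneous form. The version I would invoke is: if $\set{N_i}_{i\in I}$ are independent Poisson processes on $[0,1]$ with intensities $\lambda_i$, and if each arrival of $N_i$ at time $t$ is assigned, independently of the arrival epochs and of all other arrivals, a color $c$ from a finite palette $C_i$ with probability $p^i_c(t)$ (so $\sum_{c\in C_i} p^i_c(t)=1$ for every $t$), then the sub-processes $N_i^c$ collecting the arrivals of each color, over all $i\in I$ and $c\in C_i$, are jointly independent, and $N_i^c$ is an inhomogeneous Poisson process with intensity $\lambda_i\,p^i_c(t)$.

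First I would cast the ESM bookkeeping as such a coloring. Fix a second-class type $i$ with neighbor set $\set{a,b}$. At each arrival of $N_i$ at time $t$ the algorithm uses fresh randomness independent of the arrival process and across arrivals, namely a uniformly random first-choice in $\set{a,b}$ and two independent uniforms $r_1,r_2$ on $[0,1]$; these determine exactly one of the colors in the palette $C_i=\set{i(\bot,\bot),\,i(a,\bot),\,i(b,\bot),\,i(a,b),\,i(b,a)}$ according to the three cases defining the extended types. First-class types are left uncolored: type $i$ with unique neighbor $j$ simply keeps its process $N_i$ (of intensity $\lambda_i$), which fits the framework with a singleton palette.

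Next I would compute the coloring probabilities. Since $r_1\in[0,1]$ and $f(t)\in[0,2]$, we have $\Pr{r_1\le f(t)}=\min\set{f(t),1}$; since $r_2\in[0,1]$, $\Pr{r_2< f(t)-1}=\max\set{f(t)-1,0}$ and hence $\Pr{r_2\ge f(t)-1}=1-\max\set{f(t)-1,0}=\min\set{2-f(t),1}$. Using the independence of the first-choice draw from $r_1,r_2$ and of $r_1$ from $r_2$, the color $i(a,\bot)$ (``first-choice $=a$, propose only to the first-choice'') has probability $p^i_{i(a,\bot)}(t)=\tfrac12\cdot\min\set{f(t),1}\cdot\min\set{2-f(t),1}$, and symmetrically for $i(b,\bot)$; the color $i(a,b)$ has probability $p^i_{i(a,b)}(t)=\tfrac12\cdot\min\set{f(t),1}\cdot\max\set{f(t)-1,0}$, and symmetrically for $i(b,a)$; finally $p^i_{i(\bot,\bot)}(t)=1-\min\set{f(t),1}$. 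A quick check using $\min\set{2-f(t),1}+\max\set{f(t)-1,0}=1$ shows these sum to $1$ for every $t$, so the framework applies, and multiplying each $p^i_c(t)$ by $\lambda_i$ yields precisely the intensities asserted in the proposition; the independence across all $i(j,j')$ then follows by combining the independence of the $N_i$ (built into the Poisson arrival model) with the per-type independence from the coloring theorem.

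I do not expect a genuine obstacle here. The two points needing care are (a) confirming that the marking is truly independent of the arrival times and across arrivals — immediate from the algorithm drawing a fresh first-choice and fresh $r_1,r_2$ at each arrival; and (b) the elementary $\min/\max$ casework above, most transparently handled by separating the regimes $f(t)\le 1$ (where all $i(j_1,j_2)$-rates vanish and the ``propose'' probability is $f(t)$) and $f(t)\ge 1$ (where the $i(\bot,\bot)$-rate vanishes and the two choice-patterns split as $2-f(t)$ and $f(t)-1$). The time-dependence of the palette is harmless: the coloring theorem holds verbatim for time-varying splitting probabilities.
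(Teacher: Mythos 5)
Your proposal is correct and is essentially the argument the paper relies on: the proposition is stated there as an immediate consequence of splitting the rate-$\lambda_i$ Poisson arrivals according to independent, time-dependent marks (the uniform first-choice and $r_1,r_2$), which is precisely the inhomogeneous Poisson coloring/thinning theorem you invoke, and your $\min$/$\max$ probability computations reproduce the stated rates exactly.
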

	
	In the following, we use $i(j, j')$ to describe an extended type, where $i$ is a second-class online vertex and each of $j,j'$ is either $\bot$ or a neighbor of $i$.
	Under this independent Poisson process modeling on the arrivals of the extended types, we give an equivalent description of the ESM algorithm in Algorithm~\ref{algo:ESMextended}.
	
	\begin{algorithm}[th]
		\caption{ESM algorithm with extend online vertex types}
		\label{algo:ESMextended}
		\KwIn{A kernel instance with extended vertex types.}
		\KwOut{A matching $\+M$.}
		Initialize $\+M=\emptyset$ to be an empty matching\;
		\For{each online vertex of type $i$}
		{
			\eIf{$i$ is a first-class online vertex type}
			{
				\If{its neighbor $j$ is unmatched}
				{
					$\+M\gets \+M\cup \set{(i,j)}$\;
				}  
			}
			{
				Suppose the type is $i(j,j')$\;
				\If{$j \neq \bot$}
				{
					{
						\If {$j$ is unmatched}
						{
							$\+M\gets \+M\cup \set{(i,j)}$\;     
						}
						\ElseIf{$j'\neq \bot$}
						{
							\If {$j'$ is unmatched}
							{
								$\+M\gets \+M\cup \set{(i,j')}$\;     
							}
						}
					}
				}
			}           
		}
		\Return{$\+M$.}
	\end{algorithm}
	
	In the remaining analysis, we say that an online vertex is of type $i(j, *)$ if its extended type is $i(j, j')$ for some $j' \in J\cup \{ \bot \}$.
	Likewise we define $i(*,*)$, $*(j, *)$, $*(*, j)$, etc. These notations only consider the second class arrivals.
	Note that for all second-class online vertex type $i\in I$ and any of its neighbor $j$, the arrival rate of type $i(j, *)$ at time $t$ is $\frac{\lambda_i}{2}\cdot \min\{ f(t),1 \}$ and the arrival rate of type $i(*,*)$ at any time is always $\lambda_i$.
 

\subsection{Matching Probability of Edges}
	
	By Lemma~\ref{lemma:analysis-framework}, to derive a lower bound on the competitive ratio of the algorithm, it suffices to lower bound the probability that an arbitrarily fixed edge $(i,j) \in E$ is matched by the ESM algorithm.
	Let $M_{i j} \in \{ 0, 1\}$ be the indicator of whether $(i,j)$ is matched by the algorithm, and $F(t) = \int_{0}^t f(x) \, dx$.
 	Let $U_j(t)\in \{ 0,1 \}$ be the indicator of whether the offline vertex $j$ is unmatched at time $t$, i.e., $U_j(t) = 1$ if and only if $j$ is unmatched at time $t$.
	In the following, we provide a lower bound on $\Pr{M_{i j} = 1}$.
	
	\paragraph{First-Class Edge.}
	We first consider the case when $(i,j)$ is a first-class edge.
	By the design of the algorithm, the edge will be included in the matching by the algorithm if an online vertex of type $i$ arrives, and $j$ is unmatched, because $i$ will always propose to $j$ upon its arrival.
	Since the arrival rate of type $i$ is $\lambda_i$, we immediately have the following.
	
	\begin{lemma} \label{lemma:matching-prob-first-class}
		For any first-class edge $(i,j)\in E$, we have
		\begin{equation*}
			\Pr{M_{ij}=1}=  \int_{0}^{1} \lambda_{i} \cdot  \Pr{U_j(t)=1} \, dt. 
		\end{equation*}
	\end{lemma}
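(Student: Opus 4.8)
The plan is to compute $\Pr{M_{ij}=1}$ by integrating, over $t\in[0,1]$, the instantaneous rate at which the edge $(i,j)$ is added to $\+M$. Since $i$ is first-class, a type-$i$ arrival always proposes to $j$; hence $(i,j)$ enters $\+M$ exactly at the first type-$i$ arrival that finds $j$ unmatched, and in particular $(i,j)$ is matched by \emph{at most one} arrival, because once $(i,j)\in\+M$ we have $U_j(s)=0$ for every later $s$. Therefore, fixing a partition $0=t_0<t_1<\dots<t_n=1$ of $[0,1]$ with mesh $\delta$, the event $\{M_{ij}=1\}$ is the \emph{disjoint} union over $k$ of the events ``a type-$i$ vertex arrives in $[t_{k-1},t_k)$ and this arrival matches $(i,j)$''.

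I would then estimate the probability of the $k$-th event as $\delta\to 0$. By the Poisson property the probability of exactly one type-$i$ arrival in $[t_{k-1},t_k)$ is $\lambda_i(t_k-t_{k-1})+o(\delta)$ and that of two or more is $o(\delta)$; and conditioned on a single such arrival, it matches $(i,j)$ iff $j$ is unmatched at the arrival time, which (up to the $o(\delta)$ chance that $j$ gets matched strictly inside the interval) is the event $\{U_j(t_{k-1})=1\}$. The one point that needs care --- and which I would spell out --- is that the restriction of the type-$i$ Poisson process to $[t_{k-1},t_k)$ is independent of $U_j(t_{k-1})$: the latter is a measurable function of the arrivals of all types strictly before $t_{k-1}$ together with the algorithm's internal randomness used for those arrivals, and these are independent of the type-$i$ increment on $[t_{k-1},t_k)$ by the independence of increments and the independence across (extended) types recorded in \Cref{prop:extendedpoisson}. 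Hence the $k$-th event has probability $\lambda_i\Pr{U_j(t_{k-1})=1}\,(t_k-t_{k-1})+o(\delta)$.

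Summing over $k$ and using disjointness, $\Pr{M_{ij}=1}=\sum_{k=1}^{n}\lambda_i\Pr{U_j(t_{k-1})=1}\,(t_k-t_{k-1})+o(1)$, which is a Riemann sum for $\int_0^1\lambda_i\Pr{U_j(t)=1}\d t$; since $t\mapsto\Pr{U_j(t)=1}$ is bounded (by $1$) and non-increasing (once $j$ is matched it stays matched), these Riemann sums converge to the integral, so letting $\delta\to 0$ gives the claim. The main obstacle is purely the bookkeeping of this infinitesimal argument --- controlling the $o(\delta)$ terms uniformly in $k$ and invoking the independence above; a reader who prefers to avoid the limiting argument can obtain the same identity in one step from the Campbell--Mecke formula for the type-$i$ Poisson process applied to the integrand $U_j(t^-)$, which depends only on the configuration before time $t$, and this will be used repeatedly in what follows for the analogous second-class bounds.
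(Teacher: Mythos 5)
Your proposal is correct and is essentially the paper's argument made rigorous: the paper simply notes that $(i,j)$ is matched exactly when a type-$i$ arrival (rate $\lambda_i$) finds $j$ unmatched and states the integral identity "immediately," while you spell out the standard infinitesimal/Campbell--Mecke bookkeeping (disjointness over intervals, independence of the type-$i$ increment from the past, uniform $o(\delta)$ control) that justifies it. No gap; same approach, just more detail than the paper provides.
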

	
	\paragraph{Second-Class Edge.}
	Now suppose that $(i,j)\in E$ is a second-class edge, and let $j'$ be the other neighbor of $i$.
	There are two events that will cause edge $(i,j)\in E$ being matched by the algorithm:
	\begin{itemize}
		\item an online vertex of type $i(j, *)$ arrives, and $j$ is unmatched;
		\item an online vertex of type $i(j' ,j)$ arrives, $j'$ is already matched and $j$ is unmatched.
	\end{itemize}
	Let $t^*:=\sup \set{t\in [0,1]: f(t)\leq 1}$. We observe that
	\begin{itemize}
		\item at time $t \leq t^*$, the arrival rate of $i(j, *)$ is $\frac{\lambda_i}{2}\cdot f(t)$; after time $t^*$, the arrival rate is $\frac{\lambda_i}{2}$;
		\item before time $t^*$, the arrival rate of $i(j', j)$ is $0$; at time $t > t^*$, the arrival rate is $\frac{\lambda_i}{2}\cdot (f(t)-1)$.
	\end{itemize}
	Therefore, we have the following characterization on $\Pr{M_{i j}=1}$.
	\begin{align*}
		\Pr{M_{ij}=1} &= \int_{0}^{t^*} \frac{\lambda_{i}}{2}\cdot  f(t)\cdot \Pr{ U_{j}(t)=1}\, dt +  \int_{t^*}^{1}  \frac{\lambda_{i}}{2}\cdot \Pr{U_{j}(t)=1} \, dt  \\
		& \quad + \int_{t^*}^{1} \frac{\lambda_{i}}{2} \cdot (f(t)-1) \cdot \Pr{U_{j}(t)=1, U_{j'}(t)=0} \, dt \\
		&= \int_{0}^{1} \frac{\lambda_{i}}{2}\cdot  f(t)\cdot \Pr{ U_{j}(t)=1}\, dt 
		- \int_{t^*}^{1} \frac{\lambda_{i}}{2} \cdot (f(t)-1) \cdot \Pr{U_{j}(t)=1, U_{j'}(t)=1} \, dt.
	\end{align*}
	
	By upper bounding $\Pr{U_{j}(t)=1, U_{j'}(t)=1}$, we derive the following.
	
	\begin{lemma} \label{lemma:matching-prob-second-class}
		For any second-class edge $(i,j)\in E$, we have
		\begin{align*}
			\Pr{M_{ij}=1}& \geq \int_{0}^{1} \frac{\lambda_{i}}{2}\cdot  f(t)\cdot \Pr{ U_{j}(t)=1}\, dt - \int_{t^*}^{1} \frac{\lambda_{i}}{2} \cdot (f(t)-1) \cdot e^{-y_{j}\cdot t^*-(2-y_{j})\cdot F(t^*)-2(t-t^*)} \, dt.
		\end{align*}
	\end{lemma}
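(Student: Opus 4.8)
The plan is to start from the exact expression for $\Pr{M_{ij}=1}$ derived immediately above the lemma,
\[
  \Pr{M_{ij}=1} = \int_{0}^{1} \frac{\lambda_i}{2}\, f(t)\, \Pr{U_j(t)=1}\,dt
  \;-\; \int_{t^*}^{1} \frac{\lambda_i}{2}\, (f(t)-1)\, \Pr{U_j(t)=1,\, U_{j'}(t)=1}\,dt ,
\]
where $j'$ is the other neighbor of $i$. Since $f$ is non-decreasing and $f(t)\ge 1$ for $t>t^*$, the factor $(f(t)-1)$ in the subtracted integral is nonnegative, so a lower bound on $\Pr{M_{ij}=1}$ follows at once from a pointwise \emph{upper} bound on the joint ``survival'' probability $\Pr{U_j(t)=1,\,U_{j'}(t)=1}$. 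Hence the whole task reduces to showing that, for every $t\ge t^*$,
\[
  \Pr{U_j(t)=1,\, U_{j'}(t)=1} \;\le\; e^{-y_j t^* - (2-y_j) F(t^*) - 2(t-t^*)} .
\]

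To prove this I would single out a family of \emph{killing} arrivals: (i) the first-class online vertices adjacent to $j$ or to $j'$, and (ii) the extended second-class types of the form $*(j,*)$ and $*(j',*)$, i.e.\ second-class vertices whose \emph{first}-choice is $j$ or $j'$. The key point is an event inclusion: if an outcome lies in $\{U_j(t)=1,\,U_{j'}(t)=1\}$, then by irrevocability $j$ and $j'$ are unmatched throughout $[0,t]$, and therefore no killing arrival can have occurred in $[0,t]$ --- a first-class neighbor of $j$ (resp.\ $j'$) always proposes to $j$ (resp.\ $j'$) and would match it, and a $*(j,*)$ (resp.\ $*(j',*)$) vertex proposes to its first choice $j$ (resp.\ $j'$) before doing anything else and would match it. (Second-choice proposals are not deterministic killers, which is exactly why they are left out; dropping them only weakens, hence does not invalidate, the bound.) Consequently $\Pr{U_j(t)=1,\,U_{j'}(t)=1}$ is at most the probability that no killing arrival occurs in $[0,t]$. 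By \Cref{prop:extendedpoisson} together with the independence of the first-class Poisson arrivals, these killing arrivals form a superposition of mutually independent Poisson processes, so this probability equals $\exp\left(-\int_0^t R(s)\,ds\right)$, where $R(s)$ is the total killing rate at time $s$.

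It then remains to compute and bound $R(s)$. Using the kernel identity $x_j=1$, which gives $\sum_{i''\in N_2(j)}\lambda_{i''}=2(1-y_j)$ and likewise $\sum_{i''\in N_2(j')}\lambda_{i''}=2(1-y_{j'})$, the first-class neighbors of $j$ and of $j'$ contribute rates $y_j$ and $y_{j'}$, while the $*(j,*)$ and $*(j',*)$ arrivals contribute rates $(1-y_j)\min\{f(s),1\}$ and $(1-y_{j'})\min\{f(s),1\}$ at time $s$ (and these processes are pairwise disjoint, since a second-class arrival commits to a single first choice and a type adjacent to both $j,j'$ is split by Poisson thinning). Hence $R(s)=y_j+y_{j'}+(2-y_j-y_{j'})\min\{f(s),1\}$, and for $t\ge t^*$, using $f(s)\le 1$ on $[0,t^*]$ (so $F(t^*)\le t^*$) and $\min\{f(s),1\}=1$ on $(t^*,t]$,
\[
  \int_0^t R(s)\,ds = (y_j+y_{j'})\,t^* + (2-y_j-y_{j'})\,F(t^*) + 2(t-t^*) \;\ge\; y_j t^* + (2-y_j) F(t^*) + 2(t-t^*) ,
\]
where the inequality discards the nonnegative term $y_{j'}\,(t^*-F(t^*))$. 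Exponentiating and substituting into the displayed identity for $\Pr{M_{ij}=1}$ yields the lemma.

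The step that needs genuine care is the event inclusion in the second paragraph: one has to argue precisely that, conditioned on the joint survival event, each chosen killing arrival is a \emph{deterministic} killer (this is what forces the restriction to first-class neighbors and to first-choice second-class arrivals), and that the chosen processes are mutually independent so that the no-arrival probability factorizes into $\exp\left(-\int_0^t R(s)\,ds\right)$ with $R$ the plain sum of rates --- which is exactly the role played by the extended-type description of \Cref{prop:extendedpoisson}. The remaining steps are routine computations.
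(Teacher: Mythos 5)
Your proposal is correct and follows essentially the same route as the paper: it starts from the same exact identity for $\Pr{M_{ij}=1}$, bounds $\Pr{U_j(t)=1,U_{j'}(t)=1}$ by the probability that no arrival of the independent ``killer'' types $N_1(j)\cup *(j,*)$ and $N_1(j')\cup *(j',*)$ occurs before $t$, and then drops the nonnegative term $y_{j'}(t^*-F(t^*))$ using $F(t^*)\le t^*$. No gaps.
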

	\begin{proof}
		To prove the lemma, it suffices to argue that for any $t > t^*$, we have
		\begin{equation*}
			\Pr{U_{j}(t)=1,U_{j'}(t)=1}\leq e^{-y_{j}\cdot t^* -(2-y_{j})\cdot F(t^*)-2(t-t^*)}.
		\end{equation*}
		
		Observe that if any online vertex of type $i\in N_1(j)$ or $*(j,*)$ arrives before time $t$, then $j$ will be matched before time $t$. The same holds for the offline vertex $j'$.
		The arrival rate of type $N_1(j)\cup*(j,*)$ is
		\begin{itemize}
			\item at time $x \leq t^*$: $\sum_{i\in N_1(j)} \lambda_i + \sum_{i\in N_2(j)} \frac{\lambda_i}{2} \cdot f(x) = y_j + (1-y_j)\cdot f(x)$.
			\item after time $t^*$: $\sum_{i\in N_1(j)} \lambda_i + \sum_{i\in N_2(j)} \frac{\lambda_i}{2} = y_j + (1-y_j) = 1$.
		\end{itemize}
		Similarly the arrival rate of type $N_1(j')\cup*(j',*)$ is $ y_{j'} + (1-y_{j'})\cdot f(x)$ at time $x \leq t^*$ and $1$ after time $t^*$.
		Since $\Pr{U_{j}(t)=1,U_{j'}(t)=1}$ is at most the probability that no online vertex of type $N_1(j)\cup*(j,*)$ or $N_1(j')\cup*(j',*)$ arrives before time $t$, and the arrivals of types $N_1(j)\cup*(j,*)$ or $N_1(j')\cup*(j',*)$ are independent, we have
		\begin{align*}
	    \Pr{U_{j}(t)=1,U_{j'}(t)=1}& \leq e^{- \int_0^{t^*} \left(y_j + (1-y_j)\cdot f(x) \right) \, dx - \int_{t^*}^t 1 \, dx } \cdot e^{-\int_0^{t^*} \left( y_{j'} + (1-y_{j'})\cdot f(x) \right) \, dx - \int_{t^*}^t 1 \, dx } \\
			& = e^{-(y_{j} + y_{j'})\cdot t^* -(2-y_{j}-y_{j'})\cdot F(t^*)-2(t-t^*)} \\
			& \leq e^{-y_{j} \cdot t^* -(2-y_{j})\cdot F(t^*)-2(t-t^*)},
		\end{align*}
		where the last inequality holds because $F(t^*) \leq t^*$ (by definition of $t^*$) and $y_{j'}\geq 0$.
	\end{proof}
	
	Given Lemma~\ref{lemma:matching-prob-first-class} and~\ref{lemma:matching-prob-second-class}, to provide a lower bound on $\Pr{M_{i j} = 1}$, it remains to lower bound $\Pr{ U_{j}(t)=1}$ (in terms of the activation function $f$).

\subsection{Lower Bounding \texorpdfstring{$\Pr{ U_{j}(t)=1}$}{}}
	
	In this section, we fix an arbitrary offline vertex $j$ and provide a lower bound on $\Pr{ U_{j}(t)=1}$.
	To begin with (and as a warm-up), we first establish a loose lower bound as follows.
	
	\begin{lemma}\label{lemma:lose-lower-bound}
		For any offline vertex $j\in J$ and any time $t\in [0,1]$, we have
		\begin{equation*}
			\Pr{U_j(t)=1}\geq e^{-y_j t -(1-y_j)F(t)}.
		\end{equation*}
		Moreover, the above equation holds with equality when $t\leq t^*$.
	\end{lemma}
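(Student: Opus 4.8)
The plan is to pin down exactly which online arrivals can ever propose to $j$, and then reduce the bound to the survival probability of a single Poisson process. First I would observe that, in Algorithm~\ref{algo:ESMextended}, the only arrivals that ever make a proposal to $j$ are those of a first-class type $i\in N_1(j)$, those of type $*(j,*)$ (a second-class arrival whose first-choice is $j$), and those of type $*(*,j)$ (a second-class arrival whose second-choice is $j$); no other type can be matched to $j$. Consequently, if no arrival of a type in $N_1(j)\cup *(j,*)\cup *(*,j)$ occurs before time $t$, then $j$ is never proposed to and hence $U_j(t)=1$. This already yields
\[
\Pr{U_j(t)=1}\;\ge\;\Pr{\text{no arrival of a type in } N_1(j)\cup *(j,*)\cup *(*,j) \text{ occurs before } t}.
\]

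Next I would compute the combined arrival intensity of these three families via Proposition~\ref{prop:extendedpoisson}, using the kernel identity $\sum_{i\in N_2(j)}\tfrac{\lambda_i}{2}=1-y_j$. For $x\le t^*$ the first two families contribute intensity $y_j+(1-y_j)f(x)$ and the third contributes $0$; for $x>t^*$ the three families contribute $y_j$, $1-y_j$, and $(1-y_j)(f(x)-1)$, respectively, which again sum to $y_j+(1-y_j)f(x)$. Thus the combined intensity equals $y_j+(1-y_j)f(x)$ throughout $[0,1]$, and since the underlying Poisson processes are independent (Proposition~\ref{prop:extendedpoisson}), the no-arrival probability is $\exp\bigl(-\int_0^t (y_j+(1-y_j)f(x))\,dx\bigr)=e^{-y_jt-(1-y_j)F(t)}$, which establishes the inequality.

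For the equality when $t\le t^*$, I would argue the converse implication: $U_j(t)=1$ forces that no arrival of a type in $N_1(j)\cup *(j,*)$ occurs before $t$. Indeed, let $s<t\le t^*$ be the time of the first such arrival, if any; since type $*(*,j)$ has intensity $0$ on $[0,t^*]$, almost surely no proposal has been made to $j$ before $s$, so $j$ is unmatched at $s$, this arrival matches it, and hence $U_j(t)=0$. Combined with the fact that a type-$*(*,j)$ vertex almost surely does not arrive before $t^*$, this shows that, up to a null event, $U_j(t)=1$ if and only if no arrival of a type in $N_1(j)\cup *(j,*)$ occurs before $t$; as these two families have combined intensity $y_j+(1-y_j)f(x)$ on $[0,t^*]$, its probability is exactly $e^{-y_jt-(1-y_j)F(t)}$. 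The argument needs essentially no computation, and the only delicate point I anticipate is this ``only if'' direction: one must rule out that $j$ was already matched earlier through a second-choice proposal, which is precisely where the vanishing of the $*(*,j)$ intensity on $[0,t^*]$ is used. (The genuine difficulty — tightening this estimate for $t>t^*$ — is deferred to the subsequent lemma.)
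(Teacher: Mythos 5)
Your proof is correct and follows essentially the same route as the paper: list the three arrival families $N_1(j)$, $*(j,*)$, $*(*,j)$ that can ever match $j$, relax to the event that none of these (independent Poisson) arrivals occurs before $t$, and integrate the combined intensity $y_j+(1-y_j)f(x)$ to get $e^{-y_jt-(1-y_j)F(t)}$. Your treatment of the equality case for $t\le t^*$ (using that $*(*,j)$ has zero intensity on $[0,t^*]$, so the first arrival in $N_1(j)\cup *(j,*)$ necessarily matches $j$) is in fact spelled out more carefully than in the paper, which leaves that step implicit.
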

	\begin{proof}
		As before, we characterize the events that will cause $j$ being matched.
		Observe that if $j$ is matched before time $t$, then at least one of the following must happen before time $t$:
		\begin{itemize}
			\item a first-class online vertex type $i\in N_1(j)$ arrives;
			\item an online vertex of type $*(j,*)$ arrives;
			\item an online vertex of type $*(*, j)$ arrives.
		\end{itemize}
		
		Note that the third event will only happen after time $t^*$.
        Moreover, it will contribute to the matching of vertex $j$ only if the first-choice of the online vertex is matched upon its arrival. 
		However, for the purpose of lower bounding $\Pr{U_j(t)=1}$, we only look at the arrivals without caring whether a proposal to $j$ is made.
		Since the combined arrival rate of types $*(j,*)$ and $*(*,j)$ at time $x\in [0,1]$ is $(1-y_j)\cdot f(x)$, and the total arrival rate of vertices in $N_1(j)$ is $y_j$, the lemma follows immediately.
	\end{proof}
	
	It is apparent that the above lower bound on $\Pr{U_j(t)=1}$ can be improved when $t > t^*$ because, in the above analysis, we use the event that ``an online vertex of type $*(*, j)$ arrives'' to substitute that ``an online vertex of type $*(*, j)$ arrives and its proposal to its first-choice fails''.
	The advantage of this substitution is that now the events that may contribute to $j$ being matched are independent and it is convenient in lower bounding the probability that none of them happens.
	On the other hand, it is reasonable to believe that via lower bounding the probability that a vertex of type $*(*, j)$ fails in matching its first-choice, a better lower bound on $\Pr{U_j(t)=1}$ can be derived.
	However, this requires a much more careful characterization of these events, because some of them are not independent.
	Specifically, suppose that $j$ is not matched at time $x$, and an online vertex of type $i(j_1, j)$ arrives.
	The online vertex will contribute to $j$ being matched only if $j_1$ is matched.
	Therefore the contributions of types $i(j_1, j)$ and $i'(j_1, j)$ are not independent random events because they both depend on the matching status of $j_1$.
	Moreover, the contributions of types $i(j_1, j)$ and $i'(j_2, j)$ may also be dependent because whether $j_1$ and $j_2$ are matched might not be independent, e.g., they might have common neighbors.
	
	\smallskip
	
	Therefore, in order to provide a better lower bound on $\Pr{U_j(t)=1}$, it is inevitable to take into account the dependence on the random events.
	To enable the analysis, we introduce the following useful notations.
	
	\begin{definition} [Competitor]
		We call an offline vertex $j'$ a \emph{competitor} of $j$ if $N_2(j) \cap N_2(j') \neq \emptyset$.
		We use $\mathcal{C}(j) = \{ j_1, j_2,\ldots,j_K \}$ to denote the set of competitors of $j$.
		For each $j_k \in \mathcal{C}(j)$, we use $c_k = \sum_{i\in N_2(j)\cap N_2(j_k)} \frac{\lambda_i}{2}$ to denote the total arrival rate of types $\{ i(j_k, *) \}_{i\in N_2(j)\cap N_2(j_k)}$.
	\end{definition}
	
	Note that by definition we have $\sum_{k = 1}^K c_k = 1 - y_j$.
	In the following, we show the following improved version of Lemma~\ref{lemma:lose-lower-bound} when $t > t^*$.
	
	\begin{lemma} \label{lemma:improved-lower-bound}
		For any offline vertex $j\in J$ and any time $t\in [t^*,1]$, we have
		\begin{equation*}
			\Pr{U_j(t)=1}\geq e^{-y_{j}\cdot t -(1-y_{j})\left( e^{-F(1)} F(t^*)+ (1-e^{-F(1)}) F(t)+e^{-F(1)} (t-t^*)\right) }.
		\end{equation*}when $F(1)\geq 1$.
	\end{lemma}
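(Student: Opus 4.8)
The plan is to improve on Lemma~\ref{lemma:lose-lower-bound} by not discarding the information that an arriving vertex of type $*(*,j)$ contributes to matching $j$ only when its first-choice is already matched. First I would fix the offline vertex $j$ and set up a differential-inequality viewpoint: write $p_j(t) = \Pr{U_j(t)=1}$, and think of $j$ as being ``killed'' by three independent streams of Poisson arrivals — the first-class types $N_1(j)$ with total rate $y_j$, the first-choice arrivals $*(j,*)$ with rate $(1-y_j)f(t)$ (active only with first-choice $j$), and the second-choice arrivals $*(*,j)$, which only exist after $t^*$ and only kill $j$ conditioned on the relevant first-choice competitor being matched. The first two streams are exactly as in the loose bound and contribute the $e^{-y_j t-(1-y_j)F(t)}$ type factor; the whole point is to handle the third stream with a genuine conditional probability rather than an unconditional arrival count. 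Concretely, for a type-$i(j_k,j)$ arrival at time $x>t^*$, the probability it actually proposes to $j$ is the probability that $j_k$ is already matched at time $x$, i.e. $1-\Pr{U_{j_k}(x)=1}$, and summing over competitors $j_k\in\mathcal{C}(j)$ with rates $c_k$ gives an effective kill-rate from the third stream of $\sum_k c_k \cdot (f(x)-1)\cdot(1-\Pr{U_{j_k}(x)=1})$ — but only \emph{if} the matching events of $j$ and the $j_k$'s could be treated as suitably positively correlated / independent, which is where the subtlety lies.

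The key technical device, flagged in the introduction as ``Technical Contribution 2'', is a pseudo-positive-correlation inequality obtained by working entirely in the extended-online-vertex-type model (Proposition~\ref{prop:extendedpoisson}), where the relevant arrival processes are genuinely independent Poisson processes. I would argue that, for the purpose of lower-bounding $p_j(t)$, one may condition on $U_j$ staying $1$ and show that conditioning on this event only \emph{increases} (or at least does not decrease) $\Pr{U_{j_k}(x)=1}$ for each competitor $j_k$ — intuitively, the fewer proposals $j$ has absorbed, the more its neighbors are still available; but what we actually want is the opposite-signed statement for the kill rate, so care is needed about which direction helps. The clean route is: bound $1-\Pr{U_{j_k}(x)=1}\le 1 - e^{-F(1)}$ for \emph{all} $x$ (this is where the hypothesis $F(1)\ge 1$, equivalently a uniform lower bound $e^{-F(1)}$ on any offline vertex's unmatched probability over $[0,1]$, enters — note $e^{-F(1)}$ is itself a valid lower bound on $p_{j_k}(x)$ for every $x\le 1$ by, say, the monotone/crude version of Lemma~\ref{lemma:lose-lower-bound} combined with $y_{j_k}\le 1-\ln2$ and $F(x)\le F(1)$). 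Then the third-stream kill rate at time $x>t^*$ is at most $\sum_k c_k(f(x)-1)(1-e^{-F(1)}) = (1-y_j)(f(x)-1)(1-e^{-F(1)})$, and summing all three independent streams the survival probability is at least
\begin{equation*}
	p_j(t) \ge \exp\!\left(-y_j t - (1-y_j)F(t^*) - (1-y_j)\int_{t^*}^{t}\!\big(1 + (f(x)-1)(1-e^{-F(1)})\big)\,dx\right).
\end{equation*}
Evaluating the inner integral using $\int_{t^*}^t f = F(t)-F(t^*)$ and $\int_{t^*}^t 1 = t-t^*$ gives the exponent $-y_j t - (1-y_j)\big(F(t^*) + (1-e^{-F(1)})(F(t)-F(t^*)) + e^{-F(1)}(t-t^*)\big)$, which rearranges exactly to the claimed $-y_j t - (1-y_j)\big(e^{-F(1)}F(t^*) + (1-e^{-F(1)})F(t) + e^{-F(1)}(t-t^*)\big)$.

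The main obstacle, as anticipated, is rigorously justifying the replacement of the true conditional kill-rate by the clean upper bound $(1-y_j)(f(x)-1)(1-e^{-F(1)})$ — i.e., that conditioning on ``$j$ unmatched so far'' and on the history of the third stream does not mess up the bound $\Pr{U_{j_k}(x)=1}\ge e^{-F(1)}$, and that the three streams can be combined multiplicatively despite the competitors $j_k$ sharing second-class neighbors with $j$. I would handle this by exposing the randomness in the extended model in a fixed order: reveal all first-class and all first-choice ($*(\cdot,*)$) arrivals first, which already determines a lower bound $e^{-F(1)}$ on each $\Pr{U_{j_k}(x)=1}$ pointwise regardless of the second-choice arrivals (since second choices only ever help vertices get matched, removing them only increases unmatched probabilities — so the crude bound survives conditioning); then process the second-choice arrivals relevant to $j$, using a union-bound / independent-thinning argument that each such arrival kills $j$ with probability at most $1-e^{-F(1)}$ independently of the others given the earlier exposure. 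Some care is needed that ``second choices only help'' is stated in the right monotonicity form (a coupling argument: deleting a second-choice proposal step from Algorithm~\ref{algo:ESMextended} can only turn matched vertices unmatched, never the reverse), and that the competitor arrival rates $c_k$ sum correctly to $1-y_j$, which is already noted after the Competitor definition. Once this monotone-coupling lemma is in place, the rest is the routine Poisson-survival computation sketched above, and the constraint $F(1)\ge 1$ is used only to ensure $1-e^{-F(1)}\le 1-e^{-1}<1$ so that the bound is a genuine improvement over Lemma~\ref{lemma:lose-lower-bound} and, more importantly, to make $e^{-F(1)}$ a legitimate uniform lower bound without further case analysis.
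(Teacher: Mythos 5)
Your skeleton matches the paper's: condition on the survival of $j$, write the conditional survival probability as $e^{-\int g}$ with kill rate $g(x)=1+(f(x)-1)\sum_k c_k\cdot(\text{prob.\ } j_k \text{ matched})$, bound that probability by $1-e^{-F(1)}$ using $F(1)\geq 1$, and integrate; your final exponent rearranges to exactly the claimed bound, and you even identify the conditioning issue as the crux. However, the way you resolve that crux does not work, and it is precisely the step the paper has to labor over. The quantity entering the kill rate is the \emph{conditional} probability $\Pr{U_{j_k}(x)=0 \mid U_j(x)=1}$, and the unconditional bound $\Pr{U_{j_k}(x)=1}\geq e^{-F(1)}$ from Lemma~\ref{lemma:lose-lower-bound} cannot simply be substituted: dividing the trivial bound $\Pr{U_{j_k}(x)=0,\,U_j(x)=1}\leq \Pr{U_{j_k}(x)=0}$ by $\Pr{U_j(x)=1}$ loses a factor that destroys the estimate, so a genuine (pseudo-)positive-correlation statement is required. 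Your proposed justification is flawed in both of its ingredients. First, after revealing all first-class and first-choice arrivals, the bound $\Pr{U_{j_k}(x)=1\mid \text{exposure}}\geq e^{-F(1)}$ does \emph{not} hold pointwise: on realizations where some first-class or first-choice arrival to $j_k$ occurred before $x$, the vertex $j_k$ is matched with conditional probability $1$. Second, the monotonicity you invoke (``second choices only help vertices get matched; removing them only increases unmatched probabilities'') points in the wrong direction: it shows the reduced process overestimates unmatched probabilities, which does not let you carry the crude bound through the conditioning on $U_j(x)=1$; and the kill events coming from different arrivals $i(j_k,j)$, $i'(j_k,j)$ are certainly not conditionally independent given your exposure, since they are all driven by the single matching time of $j_k$, which still depends on the unrevealed second-choice arrivals $*(*,j_k)$.

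The paper closes exactly this gap differently. It proves the conditional statement directly: $\Pr{U_{j'}(x)=0\mid U_j(x)=1}\leq 1-e^{-F(1)}$ for every $j'\neq j$ when $F(1)\geq 1$. The trick is to replace the event $\{U_{j'}(x)=1\}$ by the smaller event $A(x)$ that \emph{no} arrival of any of the three key types for $j'$ (first-class $N_1(j')$, first-choice $*(j',*)$, \emph{and} second-choice $*(*,j')$) occurs before $x$; these are independent Poisson streams, so $\Pr{A(x)}\geq e^{-F(1)}$, and then a deletion coupling shows $\Pr{U_j(x)=1\mid A(x)}\geq \Pr{U_j(x)=1}$: removing a set of arrivals from Algorithm~\ref{algo:ESMextended} can only shrink the set of matched vertices at every time (this is the correct monotonicity, applied to deleting the competitor's key arrivals, not to deleting second-choice proposal steps globally), hence if $j$ is unmatched in the real run it stays unmatched in the run conditioned on $A(x)$. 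Combining, $\Pr{U_{j'}(x)=1\mid U_j(x)=1}\geq \Pr{A(x)}\cdot\Pr{U_j(x)=1\mid A(x)}/\Pr{U_j(x)=1}\geq e^{-F(1)}$. This is the ``pseudo-positive correlation'' inequality the introduction advertises; your write-up gestures toward a coupling of this flavor but, as formulated, your exposure-plus-independence argument would not survive scrutiny, so the proof as proposed has a genuine gap at its central step.
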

	\begin{proof}
		By Lemma~\ref{lemma:lose-lower-bound}, we have $\Pr{U_j(t^*)=1} = e^{-y_{j} \cdot t^* - (1-y_{j}) F(t^*)}$.
		Hence the statement is true when $t = t^*$.
		Observe that $\Pr{U_j(t)=1} = \Pr{U_{j}(t) = 1 \mid U_{j}(t^*)=1}\cdot \Pr{U_j(t^*)=1}$.
		To prove the lemma, it suffices to show that for all $t > t^*$, we have
		\begin{equation}
			\Pr{U_{j}(t) = 1 \mid U_{j}(t^*)=1}\geq e^{ - y_j(t-t^*) -(1-y_{j})\left( (1-e^{-F(1)}) (F(t) - F(t^*))+e^{-F(1)} (t-t^*)\right) }. \label{eqn:j-unmatched-conditioned}
		\end{equation}
		
		For ease of notation, we use $h(t)$ to denote the LHS of the above.
		Note that $h : [t^*,1] \to [0,1]$ is a decreasing function with $h(t^*) = 1$.
		Fix any $t > t^*$.
		Conditioned on $j$ being unmatched at time $t^*$, $j$ will be matched at or before time $t$ if any of the following events happen during the time interval $[t^*, t]$:
		\begin{itemize}
			\item a first-class online vertex $i \in N_1(j)$ arrives;
			\item an online vertex of type $*(j,*)$ arrives;
			\item an online vertex of type $*(*,j)$ arrives, and it fails matching its first-choice.
		\end{itemize}
		
		Since the total arrival rate of the first two events is $1$ and the arrival rate of $*(j_k, j)$ at time $x > t^*$ is $c_k\cdot (f(x)-1)$, we have the following
		\begin{align*}
			& \quad \Pr{U_{j}(t)=0\mid U_{j}(t^*)=1} \\
    &= \int_{t^*}^t  \bigg(\Pr{U_{j}(x)=1\mid U_{j}(t^*)=1} \notag  + (f(x)-1)\cdot \sum_{k=1}^K c_k \cdot \Pr{U_{j_k}(x)=0,U_{j}(x)=1\mid U_{j}(t^*)=1} \bigg) \, dx.
		\end{align*}
		
		Observe that $\Pr{U_{j}(t)=0\mid U_{j}(t^*)=1} = 1 - h(t)$.
		The above equation implies that
		\begin{align*}
			1 - h(t) = & \int_{t^*}^t  \left( 1  + (f(x)-1)\cdot \sum_{k=1}^K c_k \cdot \Pr{U_{j_k}(x)=0\mid U_{j}(x)=1} \right)\cdot h(x) \, dx.
		\end{align*}
		
		Solving the above using the standard differential equation, we have
		\begin{equation} \label{eqn:expression-h}
			h(t) = e^{-\int_{t^*}^t  g(x) \, dx},
		\end{equation}
		where $g(x) = 1  + (f(x)-1)\cdot \sum_{k=1}^K c_k \cdot \Pr{U_{j_k}(x)=0\mid U_{j}(x)=1}$.
        To provide an upper bound on $g(x)$, we first establish the following useful claim.
  
		\begin{claim} \label{claim:at-least-1-e^{-F(1)}}
			For all $x\geq t^*$ and $j'\in J\setminus \set{j}$, we have $\Pr{U_{j'}(x)=0\mid U_j(x)=1} \leq 1 - e^{-F(1)}$ when $F(1)>1$.
		\end{claim}
		\begin{proof}
			We prove the equivalent statement that $\Pr{U_{j'}(x)=1 \mid U_j(x)=1} \geq e^{-F(1)}$.
			As before, we first list the events that may cause $j'$ being matched:
			\begin{itemize}
				\item a first-class online vertex $i \in N_1(j')$ arrives;
				\item an online vertex of type $*(j',*)$ arrives;
				\item an online vertex of type $*(*,j')$ arrives.
			\end{itemize}
			
			We call the above types the \emph{key} types and use $A(x)$ to denote the event that none of the key types arrive before time $x$.
			Note that if $A(x)$ happens then $j'$ is guaranteed to be unmatched at time $x$.
			Hence we have
			\begin{align*}
				\Pr{U_{j'}(x)=1\mid U_j(x)=1} =& \frac{\Pr{U_{j'}(x)=1, U_j(x)=1}}{\Pr{U_j(x)=1}} \\
				\geq &\frac{ \Pr{A(x), U_j(x)=1}}{\Pr{U_j(x)=1}} = \frac{\Pr{A(x)}\cdot \Pr{U_j(x)=1\mid A(x)}}{\Pr{U_j(x)=1}}.
			\end{align*}
		
			Since the key types related to event $A(x)$ arrive independently, we have
			\begin{equation*}
				\Pr{A(x)} = e^{-y_j'\cdot x -(1-y_j')\cdot F(x)} \geq e^{-y_j' - (1-y_j')\cdot  F(1)} \geq e^{-F(1)},
			\end{equation*}
			where the last equality holds from the assumption that $F(1)\geq 1$.
			
			Given the above, it remains to show that
			\begin{equation}\label{eqn:monotone}
				\Pr{U_j(x)=1\mid A(x)} \geq \Pr{U_j(x)=1}.
			\end{equation}
    The statement can be proved by the coupling argument. Given the same set of randomness, let ${\+M}_1$ and $\+{M}_2$ be the matchings obtained by the ESM algorithm and the one neglecting all arrivals of key types before time $x$ respectively. We next show that $j\notin \+M_2$ if $j\notin \+M_1$ for any specified randomness, which implies \cref{eqn:monotone} immediately. Note that $j\notin \+M_1$ means 
    \begin{enumerate}
        \item  no online vertex of type $i\in N_1(j)$, $*(j,*)$ and $*(*,j)$ arrive, or
        \item  some online vertex of type $i(j'',j)$ arrive but $j''$ is not matched.
    \end{enumerate}
    In the first circumstance, it is obvious to have $j\notin \+M_2$; Otherwise, it suffices to show that $j''\notin \+M_2$ when $j''\notin \+M_1$ before the time $x'$ that the online vertex of type $i(j'',j)$ arrive. Again, we can repeat a similar argument till the beginning where all vertices are unmatched. Hence the inequality holds and our claim follows immediately.
    \end{proof}

		
		Given the above claim, we have
		\begin{equation*}
			g(x) \leq 1  + (f(x)-1)\cdot \sum_{k=1}^K c_k \cdot \left( 1 - e^{-F(1)} \right) 
			= y_j + (1-y_j)\cdot \left( \left( 1-e^{-F(1)}\right) \cdot  f(x) + e^{-F(1)} \right).
		\end{equation*}
		
		Plugging the above upper bound on $g(x)$ into Equation~\eqref{eqn:expression-h}, we obtain Equation~\eqref{eqn:j-unmatched-conditioned}, and thus finish the proof of the lemma.
	\end{proof}
 
\subsection{Putting Things Together}\label{sec:final-lower-bounds}
	
	Plugging the lower bounds on $\Pr{U_j(t) = 1}$ we have proved in Lemma~\ref{lemma:lose-lower-bound} and~\ref{lemma:improved-lower-bound} into Lemma~\ref{lemma:matching-prob-first-class} and~\ref{lemma:matching-prob-second-class}, we obtain the following when the activation function satisfying $F(1)\geq 1$.	
	For any first-class edge $(i,j) \in E$, 
	\begin{align}
		 \frac{\Pr{M_{ij}=1}}{\lambda_{i} } &\geq     \int_{0}^{t^*}    e^{-y_j\cdot t -(1-y_j)F(t)} \,dt \nonumber  \\
		&\quad + \int_{t^*}^1  e^{-y_{j}\cdot t -(1-y_{j}) \left(e^{-F(1)} F(t^*)+ (1-e^{-F(1)}) F(t)+e^{-F(1)} (t-t^*)\right)}\, dt. \label{eqn:final-lower-bound-1}
	\end{align}
	
	For any second-class edge $(i, j)\in E$,
	\begin{align}
		\frac{\Pr{M_{ij}=1} }{\lambda_i/2}
		& \geq  \int_{0}^{t^*}    f(t)\cdot e^{-y_j\cdot t -(1-y_j)F(t)} \,dt \nonumber \\
		& \quad + \int_{t^*}^1 f(t)\cdot  e^{-y_{j}\cdot t-(1-y_{j}) \left( e^{-F(1)} F(t^*)+ (1-e^{-F(1)}) F(t)+e^{-F(1)} (t-t^*) \right)} \,dt \nonumber \\
		& \quad - \int_{t^*}^1 \left( f(t)-1\right) \cdot  e^{-y_{j}\cdot t^* -(2-y_{j})\cdot F(t^*)-2\cdot(t-t^*)}  \,dt . \label{eqn:final-lower-bound-2}
	\end{align}
	
	By Lemma~\ref{lemma:analysis-framework}, to show that the ESM algorithm is $\alpha$-competitive, it remains to design an appropriate activation function $f$ such that for all $y_j \leq 1-\ln 2$, the RHS of Equations~\eqref{eqn:final-lower-bound-1} and~\eqref{eqn:final-lower-bound-2} are both at least $\alpha$ and $F(1)\geq 1$. 

\section{Design of the Activation Function}

Following the previous analysis, in this section we compute an appropriate activation function $f$ with which the lower bounds we derive in \Cref{sec:ESM} are at least $0.650$.
For ease of notation, we let
\begin{equation*}
    z(t)= e^{-F(1)}\cdot F(t^*)+ \left( 1-e^{-F(1)} \right)\cdot F(t)+e^{-F(1)}\cdot (t-t^*),
\end{equation*}
and define two functions $r_1, r_2$ as follows.
For all $y \in [0,1-\ln 2]$, let
\begin{equation*}
    r_1(y)=\int_{0}^{t^*}  e^{-y  t -(1-y)F(t)} \,dt + \int_{t^*}^1  e^{-y t -(1-y) z(t)} \,dt,
\end{equation*}
\begin{align*}
    r_2(y)=&   \int_{0}^{t^*}    f(t) e^{-y t -(1-y)F(t)} \,dt
    + \int_{t^*}^1 f(t)  e^{-y t -(1-y) z(t)} \,dt\\
   &- \int_{t^*}^1 ( f(t)-1)  e^{-y  t^* -(2-y) F(t^*)-2(t-t^*)}  \,dt.
\end{align*}

Note that the RHS of Equations~\eqref{eqn:final-lower-bound-1} and~\eqref{eqn:final-lower-bound-2} are precisely $r_1(y_j)$ and $r_2(y_j)$, respectively.
Let $y^* = 1 - \ln 2$.
Recall that $y_j \leq y^*$ follows from the feasibility of solution $\boldsymbol{x}$.
Hence our goal is to find a non-decreasing function $f: [0,1] \to [0,2]$ such that $F(1)\geq 1$ (which is required in \Cref{lemma:improved-lower-bound}) and
\begin{equation*}
     \min_{y \in [0, y^*]} \set{r_1(y)} \geq 0.650, \quad \text{and} \quad
     \min_{y \in [0, y^*]} \set{r_2(y)} \geq 0.650.
\end{equation*}

To make this optimization problem tractable, our first idea is to restrict our choice of the function $f$ such that $r_1(y)$ and $r_2(y)$ achieve their minimum value at $y = y^*$.
To achieve this goal, we put constraints on $f$ so that the derivatives of both $r_1$ and $r_2$ on $y\in [0,y^*]$ are always non-positive (see \Cref{ssec: derivative constraints} for the details).
As a consequence, our goal now becomes finding a non-decreasing function $f$ meeting all constraints, such that the objective $\min \left\{ r_1(y^*), r_2(y^*) \right\}$ is as large as possible.
However, given the complex formulas on the objective and the constraints, it is hard to compute the function $f$ that optimizes the objective.
Our second idea is to bound this complicated continuous optimization problem by discretization. 
Specifically, we restrict our choice of the function $f$ to piecewise constant functions, which allows us to translate the continuous optimization problem into one with a fixed number of parameters (see \Cref{ssec: discretization}).
Finally, it remains to use a computer program to enumerate the choices of parameters that maximize the (discretized) objective.

\subsection{Derivative Constraints}\label{ssec: derivative constraints}

Take derivative of $r_1(y)$, we have
\begin{align*}
    \frac{d r_1}{d y} & = \int_{0}^{t^*} \left( -t+F(t) \right) e^{-y t -(1-y)F(t)} \,dt  + \int_{t^*}^1 \left( -t+z(t) \right) e^{-y t -(1-y) z(t)} \,dt\\
    &\leq \int_{0}^{t^*}  \left( F(t) - t \right) e^{-y^* t -(1-y^*)F(t)} \,dt  + \int_{t^*}^1 \left( z(t)-t \right) e^{-y^* t -(1-y^*) z(t)} \,dt,
\end{align*}
where the inequality follows from the fact that
\begin{align*}
    \frac{d^2 r_1}{d y^2}=\int_0^{t^*} \left( F(t)-t \right)^2 e^{-y t -(1-y)F(t)} \,dt + \int_{t^*}^1 \left( z(t)-t \right)^2 e^{-y t -(1-y)z(t)} \,dt \geq 0.
\end{align*}

Hence to ensure that $r_1(y)$ admits its minimum value at $y^*$, it suffices to have
\begin{align}\label{eqn:cons1}
 \int_{0}^{t^*} \left(  F(t)-t \right) e^{-y^* t -(1-y^*)F(t)} \,dt  + \int_{t^*}^1 \left( z(t)-t \right) e^{-y^* t -(1-y^*) z(t)} \,dt\leq 0.
\end{align}

\medskip

Now we consider $r_2(y)$.
Take derivative of $r_2(y)$, we have
\begin{align}
     \frac{dr_2}{dy} & = \int_{0}^{t^*} \left( -t+F(t)\right)f(t) e^{-y t -(1-y)F(t)} \,dt  + \int_{t^*}^1\left(-t+z(t)\right)f(t) e^{-y t -(1-y) z(t)} \,dt \notag \\
     & \quad -\int_{t_*}^{1} \left( f(t)-1 \right) \left( -t^*+F(t^*)\right) e^{-y t^* -(2-y) F(t^*)-2(t-t^*)}  \,dt \notag \\
     & \leq \int_{0}^{t^*} \left(  F(t)-t \right)f(t) e^{-y t -(1-y)F(t)} \,dt  + \int_{t^*}^1 \left( z(t)-t \right)f(t) e^{-y t -(1-y) z(t)} \,dt \notag \\
     & \quad + \int_{t^*}^{1}  \left( f(t)-1 \right) \left( t^*-F(t^*)\right) e^{-2 F(t^*)-2 (t-t^*)} \,dt \label{eqn:immediateterm}\\
     & \leq \int_{0}^{t^*}   \left( F(t)-t \right) f(t) e^{-y^* t -(1-y^*)F(t)} \,dt  + \int_{t^*}^1 \left( z(t)-t \right)f(t) e^{-y^* t -(1-y^*) z(t)} \,dt \notag \\
     & \quad + \int_{t_*}^{1}  \left( f(t)-1 \right) \left( t^*-F(t^*)\right)  e^{-2 F(t^*)-2 (t-t^*)}  \,dt \notag,
\end{align}
where the first inequality follows from $F(t^*)\leq t^*$ (by definition of $t^*$) and the last inequality holds since the derivative of~\eqref{eqn:immediateterm} with respect to the variable $y$ is non-negative, as can be shown as follows:
\begin{align*}
   &\int_0^{t^*}\left( F(t)-t \right)^2 f(t) e^{-y t -(1-y)F(t)} \,dt + \int_{t^*}^1\left( z(t)-t \right)^2 f(t) e^{-y t -(1-y)z(t)} \,dt \geq 0,
\end{align*} which implies that the maximum value of~\eqref{eqn:immediateterm} is achieved when $y=y^*$

Hence to ensure that $r_2(y)$ admits its minimum value at $y^*$, it suffices to have
\begin{align}\label{eqn:cons2}
   &\int_{0}^{t^*}   \left( F(t)-t \right)f(t) e^{-y^* t -(1-y^*)F(t)} \,dt  + \int_{t^*}^1 \left( z(t)-t \right)f(t) e^{-y^* t -(1-y^*) z(t)}\,dt  \notag \\
     & + \int_{t_*}^{1}  \left( f(t)-1 \right) \left( t^*-F(t^*)\right)  e^{-2 F(t^*)-2 (t-t^*)}  \,dt \leq 0.
\end{align}

In summary, it remains to find a non-decreasing function $f: [0,1] \to [0,2]$ that satisfies constraints~\eqref{eqn:cons1} and~\eqref{eqn:cons2}, and that $F(1)\geq 1$, so that $\min\{ 
r_1(y^*), r_2(y^*) \}$ is as large as possible.

\subsection{Discretization}\label{ssec: discretization}

In this section we restrict our choice of activation function to non-decreasing piecewise constant functions.
Let $m\geq 1$ be an integer that controls the granularity of discretization.
We discretize the domain $(0,1]$ of function $f$ into $m$ intervals with equal width, on each of which the function value is a fixed constant.
Specifically, the $i$-th interval is $( \frac{i-1}{m},\frac{i}{m} ]$, where $i\in [m] := \{1,2,\ldots,m\}$.
For any non-decreasing piecewise constant function $f$ with $m$ intervals, we will use the following succinct representation:
\begin{itemize}
    \item Let $0\leq f_1 < f_2 < \dots < f_m \leq 2$.
    For all $i\in [m]$ and $t\in ( \frac{i-1}{m},\frac{i}{m} ]$, we have $f(t) = f_i$.
\end{itemize}

We use $k = \arg\max\{ i : f_i\leq 1 \}$ to denote the maximum index with which the activation function is at most $1$.
Therefore we have $t^* = \frac{k}{m}$.
It remains to replace the function $f$ with its succinct representation $(f_1,f_2,\ldots,f_m)$ in constraints~\eqref{eqn:cons1},~\eqref{eqn:cons2} and that $F(1)\geq 1$, so that the LHS of all these constraints become a function of $f_1,f_2,\ldots,f_m$.
For example, $F(1)\geq 1$ translates into
\begin{equation*}
    f_1 + f_2 + \ldots + f_m \geq m.
\end{equation*}

The translations for the constraints~\eqref{eqn:cons1} and~\eqref{eqn:cons2} into their discretized versions are straightforward but tedious.
In the following we pick the first term
\begin{equation*}
    \int_{0}^{t^*} \left(  F(t)-t \right)\cdot e^{-y^* t -(1-y^*)F(t)} \,dt
\end{equation*}
of constraint~\eqref{eqn:cons1} as a demonstration and omit the details for the translation of other parts.

Using the definitions we have introduced, we obtain
\begin{align*}
    & \int_{0}^{t^*} \Bigg(  F(t)-t \Bigg)\cdot e^{-y^* t -(1-y^*)F(t)} \,dt 
    = \sum_{i=1}^k \int_{\frac{i-1}{m}}^{\frac{i}{m}} \Bigg(  F(t)-t \Bigg)\cdot e^{-y^* t -(1-y^*)F(t)} \,dt \\
    \leq &\ \sum_{i=1}^k \Bigg( \Bigg(  \frac{1}{m}\cdot \sum_{s\leq i-1} f_s - \frac{i-1}{m} \Bigg)\cdot \int_{\frac{i-1}{m}}^{\frac{i}{m}} e^{-y^* t -(1-y^*)F(t)} \,dt \Bigg) \\
    = &\ \sum_{i=1}^k \Bigg( \Bigg(  \frac{1}{m}\cdot \sum_{s\leq i-1} f_s - \frac{i-1}{m} \Bigg)\cdot \int_{\frac{i-1}{m}}^{\frac{i}{m}} e^{-y^* t -(1-y^*)\Bigg( \frac{1}{m}\cdot \sum_{s\leq i-1} f_s + f_i\cdot (t-\frac{i-1}{m}) \Bigg)} \,dt \Bigg) \\
    = &\ \sum_{i=1}^k \Bigg( \Bigg( \frac{1}{m}\cdot \sum_{s\leq i-1} f_s - \frac{i-1}{m} \Bigg) \cdot e^{ (1-y^*)\left( \frac{1}{m}\cdot \sum_{s\leq i-1} (f_i - f_s) \right)} \cdot \frac{e^{- (y^* + (1-y^*) f_i) \frac{i-1}{m}}-e^{- (y^* + (1-y^*) f_i) \frac{i}{m}} }{y^* + (1-y^*) f_i} \Bigg),
\end{align*}
where the inequality follows because $F(t) - t$ is decreasing when $t < t^*$. 
Observe that the final expression is a function with variables $f_1,f_2,\ldots,f_m$.




\subsection{Computational Results}\label{ssec: computation}

With the arguments and discretization methods we have introduced in the previous two subsections, deriving a lower bound on the competitive ratio is now straightforward:
it remains to use a computer program to enumerate the parameters $f_1,f_2,\ldots,f_m$ such that the discretized version of constraints~\eqref{eqn:cons1},~\eqref{eqn:cons2} and that $F(1)\geq 1$ hold, and $\min\{ r_1(y^*),r_2(y^*) \}$ is as large as possible, where $y^* = 1- \ln 2$.
Our experiment shows that we have $\min\{ r_1(y^*),r_2(y^*) \} \geq 0.6503$ when fixing the number of intervals to be $m = 40$, and the piecewise constant activation function as follows:
\begin{align*}
f(x)=\left\{
\begin{array}{ll}
    0, & \quad 0\leq x< 0.05,\\
    0.4, &  \quad 0.05\leq x <0.075, \\
    1, & \quad 0.075\leq x < 0.675,\\
    1.2, & \quad 0.675 \leq x<0.7,\\
    2, & \quad 0.7\leq x\leq 1.\\
\end{array}
\right.    
\end{align*}

It can also be verified that under the above activation function $f$, the discretized version of constraints~\eqref{eqn:cons1},~\eqref{eqn:cons2} and that $F(1)\geq 1$ hold. 
Hence we have the following lemma.

\begin{lemma}
    There exists a non-decreasing activation function $f$ with $F(1) \geq 1$ such that
     \begin{equation*}
         \min_{y \in [0, y^*]} \set{r_1(y)} \geq 0.650 \quad \text{and} \quad
         \min_{y \in [0, y^*]} \set{r_2(y)} \geq 0.650.
     \end{equation*}
\end{lemma}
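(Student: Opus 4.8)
The plan is to exhibit an explicit activation function $f$ and verify, with the aid of a computer, that all the required inequalities hold. Since \Cref{sec:final-lower-bounds} has already reduced the whole problem to finding a non-decreasing $f : [0,1] \to [0,2]$ with $F(1) \ge 1$ satisfying the derivative constraints \eqref{eqn:cons1} and \eqref{eqn:cons2} (which force the minima of $r_1, r_2$ on $[0, y^*]$ to be attained at $y = y^*$) together with $r_1(y^*) \ge 0.650$ and $r_2(y^*) \ge 0.650$, the proof amounts to producing one such $f$. I would take the piecewise constant function displayed in \Cref{ssec: computation}, namely the step function with breakpoints at $0.05, 0.075, 0.675, 0.7$ and values $0, 0.4, 1, 1.2, 2$, and simply check the four conditions for this concrete $f$.

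First I would record that this $f$ is non-decreasing with range in $[0,2]$ by inspection, and compute $F(1) = \int_0^1 f = 0.025 \cdot 0.4 + 0.6 \cdot 1 + 0.025 \cdot 1.2 + 0.3 \cdot 2 = 1.24 \ge 1$, so \Cref{lemma:improved-lower-bound} applies and $t^* = 0.675$. Next, I would substitute $f$ into the closed forms for $r_1(y)$ and $r_2(y)$ from \Cref{sec:final-lower-bounds}; since $f$ is piecewise constant, $F(t)$ is piecewise linear and each integrand over a constancy interval of $f$ is $(\text{affine})\cdot e^{(\text{affine})}$, so every integral evaluates in closed form (as in the demonstration in \Cref{ssec: discretization}). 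I would then verify $r_1(y^*) \ge 0.650$ and $r_2(y^*) \ge 0.650$ numerically, and check the two derivative-sign constraints \eqref{eqn:cons1} and \eqref{eqn:cons2} the same way, so that $\min_{y\in[0,y^*]} r_1(y) = r_1(y^*)$ and $\min_{y\in[0,y^*]} r_2(y) = r_2(y^*)$. To make this rigorous rather than merely numerical, I would use the discretization of \Cref{ssec: discretization} with $m = 40$, whose bounds are designed to under-estimate the true values of $r_1, r_2$ and over-estimate the LHS of \eqref{eqn:cons1}, \eqref{eqn:cons2}; thus it suffices to certify the discretized inequalities, each of which is a finite arithmetic expression in $f_1,\dots,f_{40}$ that a computer program evaluates exactly (or with rigorous interval arithmetic on the exponentials).

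The main obstacle is not conceptual but one of bookkeeping and of trusting the numerics: I must make sure the discretized surrogates really are one-sided bounds on the continuous quantities (monotonicity of $F(t)-t$ on $[0,t^*]$, of $z(t)-t$, and of the exponential factors is used exactly as in the sample computation of \Cref{ssec: discretization}, and the sign of each monotonicity must be tracked correctly for both the objective terms and the constraint terms), and that the enumeration over the parameters $(f_1,\dots,f_{40})$ actually returns a feasible point achieving value at least $0.6503 > 0.650$. Given the reductions already established, once the discretized constraints \eqref{eqn:cons1}, \eqref{eqn:cons2}, $F(1)\ge 1$, and $\min\{r_1(y^*), r_2(y^*)\} \ge 0.650$ are all certified for this $f$, the lemma follows immediately; I would present the verification of these finitely many numerical inequalities (with the closed-form evaluation of one representative term spelled out and the rest deferred as routine) as the proof.
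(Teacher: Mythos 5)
Your proposal is correct and follows essentially the same route as the paper: both exhibit the same explicit piecewise constant activation function (values $0,0.4,1,1.2,2$ with $t^*=0.675$ and $F(1)=1.24\geq 1$), reduce the verification to the discretized one-sided surrogates of constraints~\eqref{eqn:cons1},~\eqref{eqn:cons2} and of $r_1(y^*),r_2(y^*)$, and certify the resulting finite arithmetic inequalities by computer, obtaining $\min\{r_1(y^*),r_2(y^*)\}\geq 0.6503$. Your added emphasis on tracking the direction of each monotonicity so the discretization is genuinely one-sided is exactly the care the paper's sample term in \Cref{ssec: discretization} illustrates, not a different method.
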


Together with Lemma~\ref{lemma:analysis-framework} and the lower bounds we derived in Section~\ref{sec:final-lower-bounds}, we prove Theorem~\ref{theo:competitive-ratio}.

\section{Conclusions and Discussions}

In this work, we study the edge-weighted online stochastic matching problem and propose the \textsf{Evolving Suggested Matching} (ESM) algorithm that improves the state-of-the-art approximation ratio from $0.645$~\cite{Y22} to $0.650$.
The ESM is equipped with an activation function and with different choices of function the algorithm subsumes many existing algorithms for the online stochastic matching problem.
%
A natural open question is to further improve this approximation ratio and explore the upper bounds of the best-possible approximation ratio for the edge-weighted online stochastic matching problem.
Our work (and \cite{Y22}) follows the framework of lower bounding the probability that some arbitrarily fixed edge is matched by the algorithm (see Lemma~\ref{lemma:analysis-framework}).
This framework enables us to simplify the analysis because we do not need to look at the edge weights (except that the optimal solution $\boldsymbol{x}$ depends on them).
On the other hand, it is unclear whether such restriction will prevent us from obtaining much better competitive ratios.
We believe that it would be a very interesting open problem to explore algorithms and analysis beyond this analysis framework.

\bibliography{refs}

\begin{thebibliography}{10}

\bibitem{conf/soda/AggarwalGKM11}
G.~Aggarwal, G.~Goel, C.~Karande, and A.~Mehta.
\newblock Online vertex-weighted bipartite matching and single-bid budgeted allocations.
\newblock In {\em {SODA}}, pages 1253--1264. {SIAM}, 2011.

\bibitem{conf/esa/AlbersS22}
S.~Albers and S.~Schubert.
\newblock Tight bounds for online matching in bounded-degree graphs with vertex capacities.
\newblock In {\em {ESA}}, volume 244 of {\em LIPIcs}, pages 4:1--4:16. Schloss Dagstuhl - Leibniz-Zentrum f{\"{u}}r Informatik, 2022.

\bibitem{conf/esa/BahmaniK10}
B.~Bahmani and M.~Kapralov.
\newblock Improved bounds for online stochastic matching.
\newblock In {\em {ESA} {(1)}}, volume 6346 of {\em Lecture Notes in Computer Science}, pages 170--181. Springer, 2010.

\bibitem{journals/sigact/BirnbaumM08}
B.~E. Birnbaum and C.~Mathieu.
\newblock On-line bipartite matching made simple.
\newblock {\em {SIGACT} News}, 39(1):80--87, 2008.

\bibitem{journals/algorithmica/BrubachSSX20a}
B.~Brubach, K.~A. Sankararaman, A.~Srinivasan, and P.~Xu.
\newblock Online stochastic matching: New algorithms and bounds.
\newblock {\em Algorithmica}, 82(10):2737--2783, 2020.

\bibitem{conf/soda/CohenW18}
I.~R. Cohen and D.~Wajc.
\newblock Randomized online matching in regular graphs.
\newblock In {\em {SODA}}, pages 960--979. {SIAM}, 2018.

\bibitem{conf/soda/DevanurJK13}
N.~R. Devanur, K.~Jain, and R.~D. Kleinberg.
\newblock Randomized primal-dual analysis of {RANKING} for online bipartite matching.
\newblock In {\em {SODA}}, pages 101--107. {SIAM}, 2013.

\bibitem{journals/jacm/FahrbachHTZ22}
M.~Fahrbach, Z.~Huang, R.~Tao, and M.~Zadimoghaddam.
\newblock Edge-weighted online bipartite matching.
\newblock {\em J. {ACM}}, 69(6):45:1--45:35, 2022.

\bibitem{conf/focs/FeldmanMMM09}
J.~Feldman, A.~Mehta, V.~S. Mirrokni, and S.~Muthukrishnan.
\newblock Online stochastic matching: Beating 1-1/e.
\newblock In {\em {FOCS}}, pages 117--126. {IEEE} Computer Society, 2009.

\bibitem{conf/focs/GamlathKMSW19}
B.~Gamlath, M.~Kapralov, A.~Maggiori, O.~Svensson, and D.~Wajc.
\newblock Online matching with general arrivals.
\newblock In {\em {FOCS}}, pages 26--37. {IEEE} Computer Society, 2019.

\bibitem{conf/soda/GoelM08}
G.~Goel and A.~Mehta.
\newblock Online budgeted matching in random input models with applications to adwords.
\newblock In {\em {SODA}}, pages 982--991. {SIAM}, 2008.

\bibitem{conf/icalp/GravinTW21}
N.~Gravin, Z.~G. Tang, and K.~Wang.
\newblock Online stochastic matching with edge arrivals.
\newblock In {\em {ICALP}}, volume 198 of {\em LIPIcs}, pages 74:1--74:20. Schloss Dagstuhl - Leibniz-Zentrum f{\"{u}}r Informatik, 2021.

\bibitem{conf/wine/HaeuplerMZ11}
B.~Haeupler, V.~S. Mirrokni, and M.~Zadimoghaddam.
\newblock Online stochastic weighted matching: Improved approximation algorithms.
\newblock In {\em {WINE}}, volume 7090 of {\em Lecture Notes in Computer Science}, pages 170--181. Springer, 2011.

\bibitem{journals/jacm/HuangKTWZZ20}
Z.~Huang, N.~Kang, Z.~G. Tang, X.~Wu, Y.~Zhang, and X.~Zhu.
\newblock Fully online matching.
\newblock {\em J. {ACM}}, 67(3):17:1--17:25, 2020.

\bibitem{conf/stoc/HuangS21}
Z.~Huang and X.~Shu.
\newblock Online stochastic matching, poisson arrivals, and the natural linear program.
\newblock In {\em {STOC}}, pages 682--693. {ACM}, 2021.

\bibitem{conf/stoc/0002SY22}
Z.~Huang, X.~Shu, and S.~Yan.
\newblock The power of multiple choices in online stochastic matching.
\newblock In {\em {STOC}}, pages 91--103. {ACM}, 2022.

\bibitem{journals/talg/HuangTWZ19}
Z.~Huang, Z.~G. Tang, X.~Wu, and Y.~Zhang.
\newblock Online vertex-weighted bipartite matching: Beating 1-1/\emph{e} with random arrivals.
\newblock {\em {ACM} Trans. Algorithms}, 15(3):38:1--38:15, 2019.

\bibitem{conf/focs/0002T0020}
Z.~Huang, Z.~G. Tang, X.~Wu, and Y.~Zhang.
\newblock Fully online matching {II:} beating ranking and water-filling.
\newblock In {\em {FOCS}}, pages 1380--1391. {IEEE}, 2020.

\bibitem{journals/mor/JailletL14}
P.~Jaillet and X.~Lu.
\newblock Online stochastic matching: New algorithms with better bounds.
\newblock {\em Math. Oper. Res.}, 39(3):624--646, 2014.

\bibitem{conf/wine/JinW21}
B.~Jin and D.~P. Williamson.
\newblock Improved analysis of {RANKING} for online vertex-weighted bipartite matching in the random order model.
\newblock In {\em {WINE}}, volume 13112 of {\em Lecture Notes in Computer Science}, pages 207--225. Springer, 2021.

\bibitem{conf/stoc/KarandeMT11}
C.~Karande, A.~Mehta, and P.~Tripathi.
\newblock Online bipartite matching with unknown distributions.
\newblock In {\em {STOC}}, pages 587--596. {ACM}, 2011.

\bibitem{conf/stoc/KarpVV90}
R.~M. Karp, U.~V. Vazirani, and V.~V. Vazirani.
\newblock An optimal algorithm for on-line bipartite matching.
\newblock In {\em {STOC}}, pages 352--358. {ACM}, 1990.

\bibitem{conf/stoc/MahdianY11}
M.~Mahdian and Q.~Yan.
\newblock Online bipartite matching with random arrivals: an approach based on strongly factor-revealing lps.
\newblock In {\em {STOC}}, pages 597--606. {ACM}, 2011.

\bibitem{journals/mor/ManshadiGS12}
V.~H. Manshadi, S.~O. Gharan, and A.~Saberi.
\newblock Online stochastic matching: Online actions based on offline statistics.
\newblock {\em Math. Oper. Res.}, 37(4):559--573, 2012.

\bibitem{journals/teco/NaorW18}
J.~S. Naor and D.~Wajc.
\newblock Near-optimum online ad allocation for targeted advertising.
\newblock {\em {ACM} Trans. Economics and Comput.}, 6(3-4):16:1--16:20, 2018.

\bibitem{conf/stoc/TangWW22}
Z.~G. Tang, J.~Wu, and H.~Wu.
\newblock (fractional) online stochastic matching via fine-grained offline statistics.
\newblock In {\em {STOC}}, pages 77--90. {ACM}, 2022.

\bibitem{conf/icalp/WangW15}
Y.~Wang and S.~C. Wong.
\newblock Two-sided online bipartite matching and vertex cover: Beating the greedy algorithm.
\newblock In {\em {ICALP} {(1)}}, volume 9134 of {\em Lecture Notes in Computer Science}, pages 1070--1081. Springer, 2015.

\bibitem{Y22}
S.~Yan.
\newblock Edge-weighted online stochastic matching: Beating 1-1/e.
\newblock {\em CoRR}, abs/2210.12543, 2022.

\end{thebibliography}
\bibliographystyle{abbrv}

\end{document}